\DeclareMathOperator{\argmin}{argmin}
\title{Atomic Column Generation For Consensus Between Algorithms: Application to Path Computation} 
\author{S\'ebastien Martin} 
\author{Pierre Bauguion}
\author{Youcef Magnouche}
\author{J\'er\'emie Leguay}
\affil{Huawei Technologies, 18 quai du point du jour, Boulogne-Billancourt, France}
\begin{document}
\begin{frontmatter}
\maketitle
\begin{abstract}
In real-life applications, most optimization problems are variants of well-known combinatorial optimization problems, including additional constraints to fit with a particular use case. Usually, efficient algorithms to handle a restricted subset of these additional constraints already exist, or can be easily derived, but combining them together is difficult.  
The goal of our paper is to provide a framework that allows merging several so-called atomic algorithms to solve an optimization problem including all associated additional constraints together. The core proposal, referred to as Atomic Column Generation (ACG) and derived from Dantzig-Wolfe decomposition, allows converging to an optimal global solution  with any kind of atomic algorithms. We show that this decomposition improves the continuous relaxation and describe the associated Branch-and-Price algorithm. We consider a specific use case in telecommunication networks where several  Path Computation Elements (PCE) are combined as atomic algorithms to route traffic. We demonstrate the efficiency of ACG on the resource-constrained shortest path problem associated with each PCE and show that it remains competitive with benchmark algorithms.
\keywords{Combinatorial optimization, Column Generation, Dantzig-Wolfe decomposition, Black-box optimization, Path Computation.}
\end{abstract}
\end{frontmatter}

\section{Introduction}

Combinatorial optimization problems involve finding an optimal solution from a finite set of feasible solutions, described by a set of constraints. These constraints can be distinguished into two types: 1) \emph{structural} constraints defining the structure of the solution, such as a path, a tree, an independent set or an assignment, and 2) \emph{additional} constraints imposing new requirements on the solution, such as a bound on the number of edges or nodes, inclusion or exclusion of nodes, conflicts, etc. Often, introducing additional constraints makes the problem much harder to solve, requiring sophisticated and tailored approaches to find the optimal solution.   In most cases, existing algorithms for the problem, never consider all desired additional constraints and they can be hard to adapt for a particular variant. 

In this paper, we develop a novel framework, called Atomic Column Generation ($ACG$), that allows to easily combine efficient existing algorithms, referred to as \emph{atomic}, without the need to develop a dedicated algorithm for the whole problem.   
The main advantage of $ACG$ is to decompose the problem into independent sub-problems (pricing problems) that can be solved using atomic algorithms, all working on partial versions of the same problem. They all produce solutions that follow structural constraints, but each, handles a different sub-set of additional constraints.
As an example, let us consider the spanning tree problem \cite{gower1969minimum} that consists in finding 
a tree with the minimum total cost that spans all nodes. The authors in \cite{kritikos2017greedy} propose an efficient algorithm when the number of nodes in every subtree, connected to the root node, must be lower than a given bound. Moreover, the authors in \cite{bui2006ant} propose an efficient algorithm for a variant where the degree of each vertex is lower than a given bound. Thanks to $ACG$, the two algorithms proposed in \cite{bui2006ant,kritikos2017greedy} can be easily combined to solve the variant of the spanning tree problem with both capacity constraints on the subtrees and node-degree constraints.

Our contribution is strongly related to multi-agent optimization, where several decision-makers try to converge to a consensus solution.
In various methods~\cite{jeong2023review},  decision variables are duplicated to decompose the problem, one for each agent, and agents coordinate through coupling constraints.
To design distributed algorithms with consensus, the Alternating Direction Method of Multipliers (ADMM)~\cite{yang2022survey} can be used. However, it decomposes the problem into a series of nonlinear sub-problems \cite{yao2019admm}, which is not always suitable. Also, in practice, its convergence is highly sensitive to a penalty parameter (stepsize) which is difficult to tune, even if some works have tried to mitigate this issue~\cite{he20121, xu2017adaptive}. In the rest of the paper, we stick to traditional linear programming methods, which quickly and reliably converge for routing problems.
To the best of our knowledge, there is no generic method that allows combining algorithms to solve partial versions of the same problem to find a consensus solution. 

The closest methods are decomposition techniques, such as Dantzig-Wolfe and Benders \cite{GRISET20221067, vanderbeck2006generic}, that offer significant advantages in combinatorial optimization. By solving, iteratively, the master problem and the sub-problems, they can handle complex problems and improve computational efficiency. 
Dantzig-Wolfe decomposition is often used for problems such as generalized constraint shortest path \cite{aneja1978constrained}, multi-commodity flow~\cite{ahuja1995applications} or multi-paths routing~\cite{liu2015multi} where the Column Generation algorithm~\cite{desaulniers2006column} is commonly used.
However, in Dantzig-Wolfe~\cite{vanderbeck2006generic} decomposition, structural constraints are usually kept in the master while only additional constraints are pushed to the pricing problems. 
Even if $ACG$ is based on the Dantzig-Wolfe decomposition, each pricing problem is composed of all structural constraints and one or multiple additional constraints. Hence, sub-problems have the same solution structure as the global problem. While in classical Dantzig-Wolfe decomposition, the master problem selects a collection (not necessarily one) of columns in the final solution, in ACG, the master problem builds a consensus over a set of pricing problems, each solved by an atomic algorithm. In addition, ACG is able to handle black-box solvers as atomic algorithms, specifically designed for a hard additional constraint, or to deal with confidential data.

In this paper, we consider a real telecommunication network use case, called Augmented PCE (Path computation Element), in which merging existing atomic algorithms is significantly important. This use case consists in combining several shortest path algorithms, each handling specific additional constraints like resource constraints or node inclusion. Based on this use case, we introduce and study an example where ACG applies for shortest-path problems.

\textbf{Main contributions.} 

\textit{i) ACG}: we propose a generic  decomposition method based on a Dantzig-Wolfe decomposition over a consensus-based model where some constraints and variables are duplicated to establish a consensus between sub-problems. Its goal is to merge partial solutions of the original problem obtained from a pool of atomic algorithms. 
We show that this decomposition improves the continuous relaxation of the original model and we introduce a Branch-and-Price algorithm to solve it optimally.

\textit{ii) Augmented PCE}: We apply ACG on Generalized Constrained Shortest Path (GCSP) problems which are typically solved by PCE controllers in telecommunication networks to route traffic. In this context, we propose an Augmented PCE architecture and the mathematical decomposition at its core to assemble features from a single or several PCEs.
Indeed, available atomic algorithms inside PCEs can be complex black-boxes to satisfy Quality of Experience (QoE)~\cite{calvigioni2018quality} or deterministic latency~\cite{NCtoolbox} constraints. They could also be dedicated algorithms to address hard constraints that are difficult to combine with others.
The only expectation of the Augmented PCE is that atomic algorithms must return an elementary path of minimum arc cost and be able to verify feasibility for a given path, which is the case in practice. We show that ACG has the capability to merge partial solutions (easier) of the original problem, obtained from atomic algorithms, to find the optimal solution.

\textit{iii) Evaluation:} we study a typical use case where a Resource Constraint Shortest Path (RCSP) problem with explicit upper and lower bound constraints must be solved and we propose, as a benchmark, a new custom algorithm called MultiPulse  (based on Pulse~\cite{LOZANO2013378}). To compare with other methods, we consider RCSP problem as efficient specific algorithms can be derived for it. We evaluate the performance of ACG, against MultiPulse and a Compact Model solved with IBM ILOG CPLEX~\cite{cplx}. We show that it remains competitive against a dedicated algorithm while providing the same level of flexibility as existing programmable TE solvers.

\textbf{Paper structure.} 
A telecommunication use case, called Augmented PCE, is given in Sec.~\ref{sec:system} to motivate why the proposed solution is needed. To help the reader, an application of our method in the Augmented PCE use case is given in Sec.~\ref{sec:optim}. 
Sec.~\ref{sec:general} generalizes the approach to decompose integer programs. Sec.~\ref{sec:results} presents results over RCSP problems along with an efficient benchmark algorithm. Sec.~\ref{sec:conclusion} concludes the paper.

\section{Use case from Telecommunication: the Augmented PCE}
\label{sec:system}
In telecommunication networks, traffic is typically \textit{engineered} to meet a wide range of  requirements~\cite{mendiola2016survey} by a (logically) centralized control entity that is responsible for path computation and network optimization, called a Path Computation Element (PCE)~\cite{paolucci2013survey}. 
When a path computation request is received with a set of constraints (e.g. delay, disjointness) to satisfy, the PCE computes a path and responds with the appropriate routing information.
PCE controllers have to handle a wide range of path computation requirements with various metrics and constraints. While
the most popular ones have been documented~\cite{IANA},
each operator can have its own operational constraints.
Small operators typically use off-the-shelf PCEs, like any other network management tools, whereas large service providers have internal resources to develop their own with a high degree of customization.
In both cases, the main challenge for network management teams is to quickly handle emerging routing requirements when the feature is not available. Indeed, a new routing requirement may call for the development of a dedicated algorithm. Its design and implementation, as well as its integration into the overall PCE, can take a critical amount of time from a business perspective.

In practice, PCE controllers are architected as a set of nested algorithms~\cite{slicing} that forms a hierarchy going from the most basic ones to the most specialized ones. The most basic algorithm is typically an efficient bidirectional Dijkstra~\cite{luby1989bidirectional}  which is then plugged into other algorithms to calculate paths with various constraints. In turn, these algorithms can be plugged inside multi-paths algorithms to find k-shortest paths or a maximally disjoint pair of paths, for instance. 
This approach allows adding on top of existing algorithms new requirements. To satisfy diverse routing requirements, PCE architectures can rapidly become complicated and difficult to maintain because of the dependencies between algorithms. Indeed, the insertion of an algorithm may require the modification of many others. 
To overcome maintenance issues with rigid PCE architectures, generic and programmable traffic engineering solvers like DEFO (Declarative and Expressive Forwarding Optimizer)~\cite{hartert2015declarative} have been proposed. DEFO uses a Domain Specific
Language (DSL) to model the operator’s intents and constraints. The DSL is then translated to a CP (Constraint Programming) or an ILP (Integer Linear Programming) model and pushed to an off-the-shelf solver (e.g. IBM ILOG CPLEX~\cite{cplx}). Using
this solution, new constraints can be quickly developed and automatically combined with existing ones. Unfortunately, such an approach suffers from a number of limitations. First, the use of a generic solver does not scale for real-life applications. Second, developments are required to update the model using a specific paradigm (CP or ILP) and the DSL needs to be modified accordingly.
Finally, operators cannot plug and enjoy the existing algorithm libraries they developed, or trust, to handle part of their routing requirements.

To best trade-off between programmability and re-usability, we propose the \textit{Augmented PCE} architecture. It lets operators quickly extend their own PCE or third-party libraries with new features, without the need to modify them. As shown in the example of Fig.~\ref{fig:features}, different features, i.e. sub-set of \emph{additional} constraints, can be supported by the set of available PCEs, all capable of finding routing paths, i.e. a solution constrained by the same \emph{structural} constraints. Some features can be redundant, with the possibility that one implementation is the best among the others. 
Therefore, the goal of the Augmented PCE is to map each requested constraint with an atomic algorithm. It decides which atomic algorithm to use and with what feature or configuration. User preferences regarding atomic algorithms can be considered when deciding the strategy. The orchestration can be realized solving a matching problem. However, this part is out of the scope of this paper. 

The Augmented PCE provides a way to support requirements that are not supported by any of the available PCEs, or to address them better, by providing faster or better solutions. It can be used by an operator to combine features from different vendors, or to augment its own PCE with external features. 
Note that an atomic algorithm can be called several times. For example, if the original request asks for multiple end-to-end QoS constraints and the only algorithm available solves a constrained shortest path for one metric, multiple instances of this atomic algorithm can be combined to find a solution. The features used by each PCE, called by the Augmented PCE, are not necessarily disjoint. The same feature can be considered by different PCE, which can help the consensus.

\begin{figure}[t]
     \centering    \includegraphics[width=0.8\linewidth]{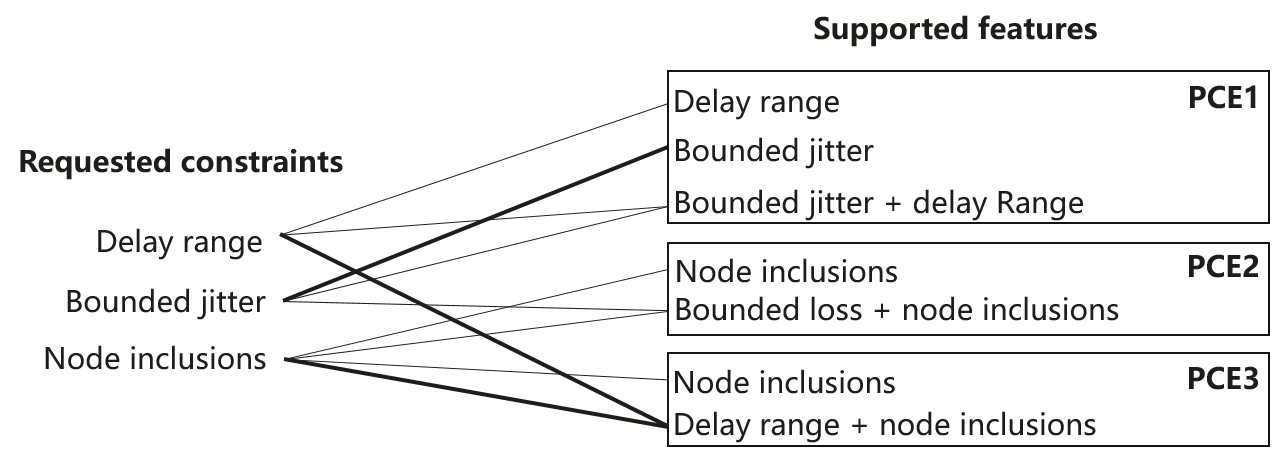}
    \caption{Mapping example between required constraints vs. supported features. Bold lines represent a possible selection of atomic algorithms, i.e. features from PCE1 and PCE3.
    This example shows that the Augmented PCE selects the most appropriate PCEs as atomic algorithms to handle the requested constraints.
    }
    \label{fig:features}
\end{figure}

Several solutions have been proposed for the combination of multiple PCEs. In the PCEP~\cite{rfc4655} protocol, multiple PCEs can be used 
to compute sub-paths and concatenate them. It allows distinct PCEs to be responsible for path computation in their respective domain of authority. 
However, in all cases, only one PCE per domain is selected per request. If constraint A is only available in PCE 1, and B is only available in PCE 2, there is no way to find a path satisfying constraints A and B.

Let us review a non-exhaustive list of path computation problems, solved through atomic algorithms, inside PCEs.
\paragraph{Constrained Shortest Path (CSP)} It consists in finding the shortest path under maximum end-to-end constraints on additive metrics (IGP cost, TE metric, QoS). This problem is weakly NP-hard with one  constraint  and becomes strongly NP-hard with a polynomial number \cite{KORKMAZ2002225}. 
LARAC~\cite{juttner2001lagrange} and GEN-LARAC~\cite{xiao2005gen} are classical heuristic options based on Lagrangian relaxation to satisfy one or multiple  \textit{upper bound} constraints, respectively 
As a sub-routine in these algorithms, an efficient bidirectional Dijkstra~\cite{luby1989bidirectional} is typically used. Some efficient exact dynamic programming algorithms such as Pulse~\cite{LOZANO2013378} have also been proposed for multiple maximum end-to-end constraints. 
\paragraph{Resource Constrained Shortest Path (RCSP)} For this variant, multiple additive end-to-end metrics must have maximum and minimum values. These constraints are called \textit{range} constraints. This problem is strongly NP-hard even for one range \cite{pugliese2013survey}. Range constraints are useful, for instance, to compute a backup path with a QoS close to that of the primary path.
It can be solved using Lagrangian relaxation in combination with dynamic programming~\cite{beasley1989algorithm, pugliese2013survey}. The proposed solutions are heuristics or with an exponential number of states at each node.
\paragraph{Generalized Constrained Shortest Path (GCSP)} This class of problems covers the generic case where constraints can be linear, non-linear, with inclusions, etc, which are strongly NP-hard and even not in NP for some variants. Note that the RCSP is a particular case of the GCSP. Let us take two examples. Inclusion constraints can be handled using the Constrained Shortest Path Tour algorithm~\cite{ferone2016constrained, martin2022constrained}. 
In telecommunication networks with Segment Routing (SR), shortest paths with a bounded number of labels~\cite{jadin2019cg4sr} must be calculated.
GCSP problems have applications in various fields, beyond networking, such as crew scheduling, and crew rostering problems \cite{LOZANO2013378, SHI201713}.

In the rest of this paper, we will step out from the telecommunication application and present how ACG can be used for GCSP problems and generalized for the decomposition of problems where independent sub-problems can be solved using atomic algorithms.

\section{Application of Atomic Column Generation (ACG) }
\label{sec:optim}

We first propose an extended model with an exponential number of variables to establish a consensus among a set of atomic algorithms. It comes from the Dantzig-Wolfe decomposition applied to a consensus-based reformulation of the original model for GCSP. The detailed reformulation, decomposition, and generalization will be given in Sec~\ref{sec:general}. 
Then, we present a practical Branch-and-Price algorithm, called ACG, to derive optimal solutions.

\subsection{Problem Decomposition}

Our goal is to design a Dantzig-Wolfe decomposition for GCSP problems such that pricing problems associated with different requirements can be solved with a set of available atomic algorithms. The mathematical expression of constraints handled by each atomic algorithm can be totally unknown to ACG. 
It is however assumed that the solution returned by atomic algorithms is an elementary path, i.e. a loop-free path, of minimum arc cost from the source to the destination, where the costs can be given as input. Note that this property is valid for all algorithms in practice.
Atomic algorithms must also be able to evaluate the feasibility of a solution with regard to the constraints they handle. It can be simply realized by asking them to find a path on a filtered graph composed only of arcs from the path we want to verify.
Atomic algorithms can be exact or heuristic. In both cases, the proposed ACG algorithm finds optimal solutions as atomic algorithms are only used to guide the search and verify constraints.
When exact, as we will show in Sec.~\ref{sec:general} for the generalization, ACG provides a better continuous relaxation.

Let's consider a set $\mathcal{A}$ of atomic algorithms that have been selected to cover all constraints of a particular GCSP problem. From an optimization point of view, the problem consists in finding a path $\tilde{p}$ of minimum cost such that this path is a feasible path for each atomic algorithm of $\mathcal{A}$. In the following, we will show that even if an atomic algorithm solves the associated problem heuristically in the end the ACG converges to the optimal solution.

Consider a network $G=(V, A)$ where $V$ is the set of nodes and $A$ is the set of arcs. Let $c: A\rightarrow \mathbb R^+$ be an arc-cost function.  $P_{\alpha}$ is the set of feasible paths associated with the atomic algorithm ${\alpha}\in \mathcal{A}$ from source $s$ to destination $t$, i.e. the set of paths respecting all constraints associated with   ${\alpha}\in \mathcal{A}$. 
Let us introduce the following variables: 
	\begin{itemize}
		\item $x_{a} \in \{0,1\} $: equals to 1 if the arc $a$ belongs to the solution path $\tilde{p}$, 0 otherwise. For each $a\in A$.
		\item $y_{p}^{\alpha} \in \{0,1\} $: equals to 1 if the path $p$ is selected as a feasible path by atomic algorithm ${\alpha}$, 0 otherwise. For each ${\alpha}\in \mathcal{A}$, and $p\in P_{\alpha}$.
	\end{itemize}

Variables $x$ link the path variables $y$ provided by atomic algorithms to ensure global convergence towards an optimal consensus solution. Let $ACG\text{-}M$ be the linear program: 
\begin{align}
		\nonumber  &  &  \min \quad\sum_{a\in A} c_a x_{a}\\
        \label{ACGRCSP:tree} & &\sum_{a\in \delta^+(u)} x_{a}  & \leq 1 & & \forall u\in V, \\ 
		\label{ACGRCSP:convexity} &\beta_{\alpha}:  & \sum_{p\in P_{\alpha}}  y_{p}^{\alpha}  &= 1 & & \forall {\alpha}\in \mathcal{A}, \\
		\label{ACGRCSP:link} &\gamma_{a,{\alpha}}:  & x_a - \sum_{p\in P_{\alpha}:a\in p}  y_{p}^{\alpha}& \geq 0 & & \forall {\alpha}\in \mathcal{A}, ~ \forall a \in A.\\
       \label{ACGRCSP:xint} & & x_a \in\{0,1\} & & & \forall a\in A\\
       \label{ACGRCSP:yint} & & y_{p}^{\alpha} \in \{0,1\}& & & \forall {\alpha}\in \mathcal{A}, \forall p\in P_{\alpha}
\end{align}
where $\delta^+(u)$ is the set of outgoing arcs of node $u$. Inequalities~\eqref{ACGRCSP:tree} ensure that the path is loop-free and elementary. As variables $x$ represent a consensus from paths provided by each atomic algorithm, we need to ensure that the induced path is an elementary one (see below Prop. \ref{propo1}).  Equalities~\eqref{ACGRCSP:convexity} guarantee that for each atomic algorithm, exactly one path that respects its associated constraints is selected. Inequalities~\eqref{ACGRCSP:link} imply that if a path for one algorithm uses link $a$, the final path also uses this link. If $x_a=0$ then Inequalities~\eqref{ACGRCSP:link} impose that variables $y$ where the link $a$ is used are set to 0. Thus, if $x$ induces a path, then each $y^\alpha$, for all $\alpha\in \mathcal{A}$ induces the same path.  
Constraints ~\eqref{ACGRCSP:link} are sufficient with $\geq$ since for every atomic algorithm, only one path is selected (induced by $y$), and path induced by $x$ cannot visit a node more than once. This implies that 
paths induced by $y$ variables and $x$ variables must be the same and thus equality is not needed.   
Let us denote by $ACG$-$M$-$RL$ the linear relaxation associated with $ACG$-$M$ where integrality constraints \eqref{ACGRCSP:xint} and \eqref{ACGRCSP:yint} are replaced by $x_a\geq 0$, $\forall a\in A$, and $y_P^{\alpha}\geq 0$. Remark trivial inequalities $\leq 1$ can be removed, since they are induced by constraints \eqref{ACGRCSP:tree} and \eqref{ACGRCSP:convexity}.

\begin{proposition}
If variables $y$ are only associated with elementary paths, then variables $x$ induce an elementary path.\label{propo1}
\end{proposition}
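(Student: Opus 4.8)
The plan is to work on a fixed integer feasible solution of $ACG$-$M$ and to show that the arcs activated by $x$ reproduce, without repetition, the paths chosen by the atomic algorithms. First I would use the convexity equalities~\eqref{ACGRCSP:convexity} together with the integrality of the $y$ variables to extract, for every ${\alpha}\in\mathcal{A}$, a single selected path $p_{\alpha}\in P_{\alpha}$ with $y^{\alpha}_{p_{\alpha}}=1$ and all remaining components equal to $0$; by hypothesis each such $p_{\alpha}$ is elementary. Plugging this into the linking inequalities~\eqref{ACGRCSP:link}, for every arc $a\in p_{\alpha}$ the sum on the right-hand side reduces to $y^{\alpha}_{p_{\alpha}}=1$, forcing $x_a\ge 1$ and hence $x_a=1$. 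Consequently the support of $x$ contains every arc of every selected path.

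The core of the proof is to turn the out-degree inequalities~\eqref{ACGRCSP:tree} into a determinism statement. Writing $p_{\alpha}=(s=v_0,v_1,\dots,v_k=t)$, each internal node $v_i$ with $i<k$ already carries its path-arc $(v_i,v_{i+1})$ in the support, and \eqref{ACGRCSP:tree} forbids any second outgoing arc, so $(v_i,v_{i+1})$ is the \emph{unique} arc leaving $v_i$ in the support of $x$. Therefore the trajectory obtained by starting at $s$ and repeatedly following the unique outgoing arc is forced to coincide with $p_{\alpha}$ up to its first arrival at $t$; being identical to the elementary path $p_{\alpha}$, this $s$--$t$ trajectory visits each node exactly once. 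The consensus then follows for free: for any two algorithms ${\alpha}$ and ${\alpha'}$ the first arcs of $p_{\alpha}$ and $p_{\alpha'}$ are both the unique arc leaving $s$ and thus coincide, and propagating this equality node by node along the shared out-arcs yields $p_{\alpha}=p_{\alpha'}$, so all selected paths collapse to a single elementary path $p^{*}$ induced by $x$.

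The delicate point is to make \eqref{ACGRCSP:tree} do the real work, namely to certify determinism: because every internal node of $p^{*}$ has its single out-slot already occupied by a path-arc, the trajectory leaving $s$ has no freedom and must follow $p^{*}$ until $t$, which is exactly what makes it loop-free. One has to resist concluding more than the constraints give, since an out-degree bound of one does not forbid cycles in the support (a cycle has out-degree one at each vertex), and indeed \eqref{ACGRCSP:tree}, \eqref{ACGRCSP:convexity} and \eqref{ACGRCSP:link} do not pin $x$ down to the incidence vector of $p^{*}$: an arc leaving $t$, or arcs on a component disjoint from $s$, could still carry $x_a=1$ and even close a loop back into $p^{*}$ through an extra incoming arc. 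The main obstacle is therefore to specify that the path induced by $x$ is the $s$--$t$ trajectory consumed by the consensus, and/or to invoke $c\ge 0$ together with the minimization so that no superfluous arc survives in the solutions of interest, thereby upgrading ``$x$ contains an elementary $s$--$t$ path'' to ``$x$ is that elementary path''.
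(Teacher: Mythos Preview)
Your argument is correct and follows essentially the same route as the paper: use the linking constraints~\eqref{ACGRCSP:link} to force the support of $x$ to contain each selected $p_{\alpha}$, use the out-degree bound~\eqref{ACGRCSP:tree} to make the trajectory from $s$ deterministic and hence equal to every $p_{\alpha}$, and then appeal to nonnegative costs plus minimality to eliminate any residual disconnected cycle. You are in fact more careful than the paper, which states these steps tersely; your explicit identification of the need for $c\ge 0$ and minimization to pass from ``$x$ contains an elementary $s$--$t$ path'' to ``$x$ is that path'' matches exactly the paper's one-line invocation of positive link costs.
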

\begin{proof}
Inequalities~\eqref{ACGRCSP:tree} ensure that in the path induced by $x$, at most one outgoing link is selected per node. 
This guarantees that no loops, connected to the path induced by $x$. 
As the link costs are positive, no optimal solution can be composed of a path and a circuit. As the path induced by $x$ must cover paths induced by $y$ due to constraints \eqref{ACGRCSP:link} this ensures that the each path induced by each $y^{\alpha}$ and $x$ induces the same path.  Thus, the solution associated with $x$ is always an elementary path. 
\hfill $\square$
\end{proof}
$ACG$-$M$-$RL$ 
can be solved with Column Generation.
Let  $\beta$ and $\gamma$ be the dual solution associated with Inequalities \eqref{ACGRCSP:convexity} and \eqref{ACGRCSP:link} of $ACG$-$M$-$RL$. To solve it, we can generate on-the-fly columns (i.e., variables $y_p^{\alpha}$). In the remaining, we describe the associated pricing problem and the Column Generation algorithm, and we propose a dedicated branching scheme to reach optimal solutions. Combining all these ingredients, we propose an efficient Branch-and-Price algorithm, called $ACG\text{-}A$. 

Remark that the goal of the proposed model is different from the traditional Column Generation models for the multi-paths or the multi-commodity flow problems. In existing approaches, the goal is to find a set of paths respecting some constraints (e.g. disjointness, capacity), whereas in our case, the goal is to find one path respecting an heterogeneous set of constraints. 

\subsection{Pricing and Column Generation}

$ACG$-$M$ has an exponential number of variables and thus we cannot enumerate them. Therefore, we consider a Restricted Master Problem (RMP) associated with $ACG$-$M$-$RL$  where a limited number of variables (i.e., columns) are involved. At initialization, before any column has been generated, we consider a dummy variable for each atomic algorithm with a huge cost in the objective function and for which the associated path is empty. These dummy variables are needed due to the Equalities \eqref{ACGRCSP:convexity}, otherwise, the solution 
of the RMP is unfeasible. After some iterations of the algorithm, these dummy variables will be replaced in the RMP by the new generated columns/variables ($y_p^\alpha$).

Let us consider a dual solution of an RMP. 
The pricing problem, for a given atomic algorithm $\alpha$, consists in finding an elementary path $p$ with negative reduced cost, i.e. such that $\sum_{a\in p}  {\gamma}_{a,\alpha}- {\beta}_\alpha<0$. 
Remark that the atomic algorithm $\alpha$ is able to find an elementary path with minimum cost. Therefore, we can call it directly to solve the pricing problem considering dual values ${\gamma}_{a,\alpha}$ as arc costs, without knowing which kind of constraints it handles. According to this remark, we can easily deduce a Column Generation algorithm. It alternates between two phases, solving the RMP and solving the pricing problems to add new columns/variables. When no paths with negative reduced cost can be found by any of the atomic algorithms, the procedure stops. The pricing problem solved by each atomic algorithm can be itself an NP-hard problem. In this case, the linear relaxation cannot be solved in polynomial time.

\subsection{Branching Scheme}
\label{sec:optim:branch}

In this section, we describe the whole framework. Fig.~\ref{fig:framework} shows each step of the framework and how they interact together.
\begin{figure}[t]
    \centering
        \includegraphics[width=0.9\linewidth]{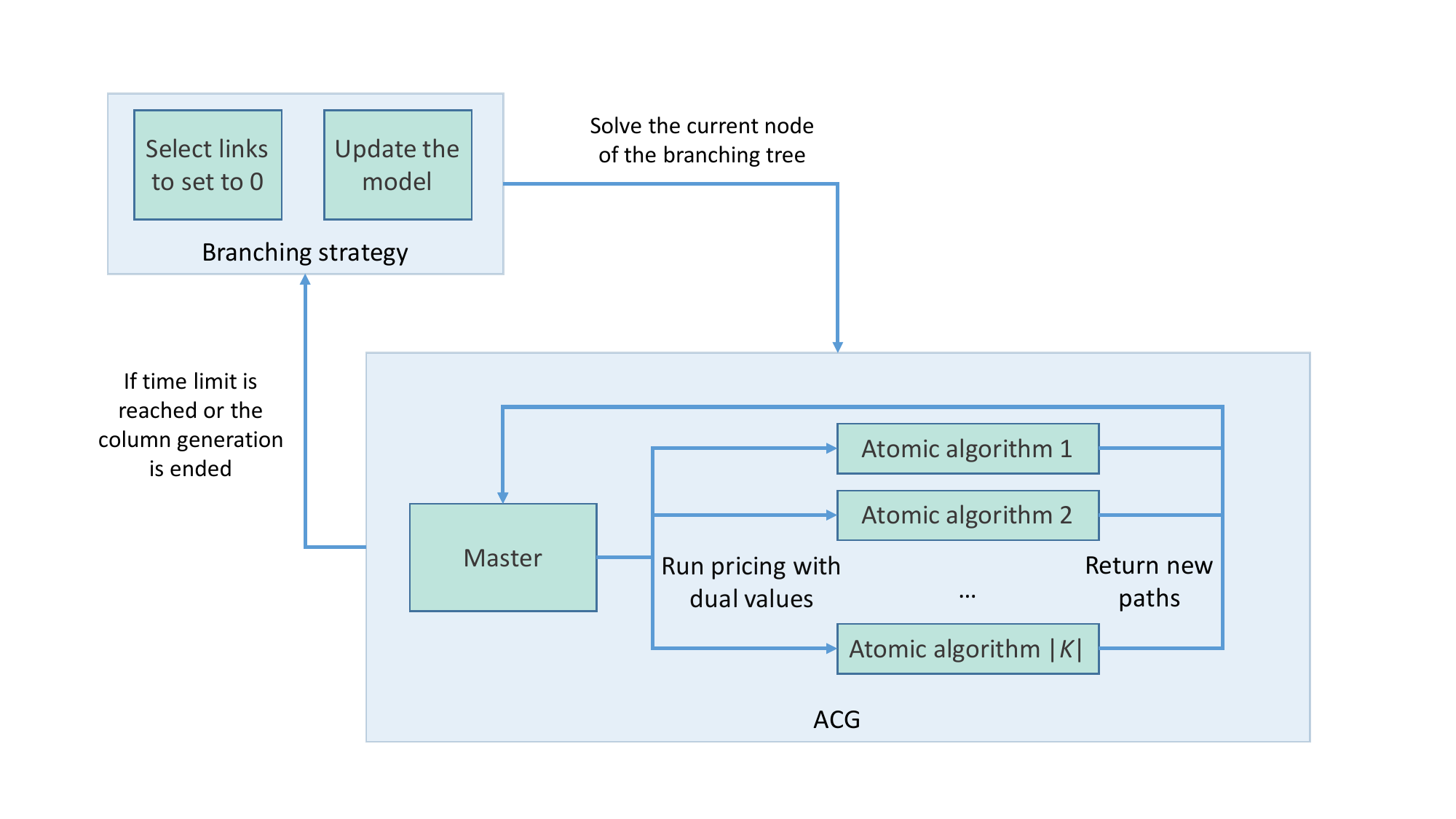}
        \caption{Framework description, where the upper block represents the branching strategy and the bottom block represents the column generation to solve the ACG model.}
        \label{fig:framework}
\end{figure}

To derive optimal integer solutions, a classical Branch-and-Bound~\cite{morrison2016branch} can be applied on top of Column Generation. However, a major issue when solving $ACG$-$M$-$RL$ comes from the time it takes to certify optimality. While we can obtain the optimal solution after a few iterations of Column Generation, we have to wait for the RMP to verify optimality. Therefore, as we need to compute a lot of these relaxations when branching, computational time can explode. 

To overcome this limitation, we designed a tailored branching scheme that is both taking advantage of solving $ACG\text{-}M$ to provide a good relaxation in a short amount of time and the possibility to call atomic algorithms to opportunistically drive the search and verify feasibility. More specifically, we call 
$$
p,l  \gets \textbf{ACG-Solve($\bar{A},T_{acg}$)}
$$
the function solving $ACG$-$M$-$RL$, 
within a time limit $T_{acg}$ on a subgraph composed of the set of arcs $\bar{A}$ ($\bar{A} \subseteq A$). As outputs, $p$ denotes a feasible path extracted from the column generation (if any), and $l$ is a dual solution-based lower bound, called, the Lagrangian bound \cite{lubbecke2010column}. Remark that $p$ can be empty if atomic algorithms do not generate feasible paths during the column generation algorithm. When a path is generated by a pricing problem, we test its feasibility for the other atomic algorithms and $p$ corresponds to the best feasible solution found.

Using this function, our branching scheme is designed as a dynamic programming algorithm, where the path is constructed and modified until every (relevant) solution is enumerated. More specifically, starting from the source $s$, it tries to construct an elementary path toward the destination $t$. Therefore, each branching node in the tree consists in choosing the next arc to take up to the destination. Based on this decision, it calls the different atomic algorithms to grasp at the earliest stage unfeasibility or optimality (see. Algo.~\ref{branch} for details). It also backtracks to improve the best solution and ensure a complete branching.

As calling \textbf{ACG-Solve} and atomic algorithms may take time, we chose to allocate them very short time limits. In practice, if an atomic algorithm does not offer this option, one can just use a timeout after which we do not expect a response anymore. When this exception occurs, we cannot get any \textit{certificates} on optimality and feasibility, which means that we cannot infer lower bounds or (un)-feasibility. Nevertheless, any feasible solution (for at least one atomic algorithm) may provide sufficient insights for the direction to take. On top of that, it is possible to extract a lower bound from \textbf{ACG-Solve} even if the time was not enough to return the optimal solution, using the Lagrangian bound. Another way to speed up \textbf{ACG-Solve} is to reuse all generated columns from previous runs. In our framework, we always consider the same model at each node of the branching tree. All generated paths are kept in the whole branching tree. The set $\bar{A}$ is managed by setting $x_a=0$ for each $a\in A\setminus \bar{A}$, this ensures that pricing problems do not change over iterations. 
In the following, we defined a heuristic to be \emph{non-trivial} if it is able to check if a path is feasible. 

\begin{proposition}
    The ACG-M with this branching scheme reaches the optimal solution even if a non-trivial heuristic is used for the pricing problems.
\end{proposition}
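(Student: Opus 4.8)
The plan is to separate the two roles that the atomic algorithms play in the Branch-and-Price scheme: guiding the search (together with the Lagrangian bound $l$ returned by \textbf{ACG-Solve}) and certifying feasibility of a completed path. I would argue that correctness rests entirely on the second role plus the completeness of the branching enumeration, so that the heuristic (as opposed to exact) nature of the pricing affects only the running time, never the returned solution. By Proposition~\ref{propo1} the optimum of $ACG$-$M$ corresponds to a minimum-cost elementary $s$--$t$ path that is feasible for every $\alpha\in\mathcal{A}$; the whole proof reduces to showing that the branching tree visits this path and correctly recognizes it as feasible and optimal.

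First I would establish completeness of the enumeration. The branching scheme grows an elementary path arc-by-arc from $s$: at each node it fixes the next arc, and the backtracking step ensures that every admissible next-arc choice is eventually explored. Ignoring all pruning, a depth-first traversal of this tree enumerates the entire (finite) set of elementary $s$--$t$ paths, so the optimal path appears as one of the leaves. I would then verify that every form of pruning is conservative. Bound-based pruning can be carried out with the accumulated arc cost of the current partial path, which is a valid lower bound on any completion because $c\geq 0$; hence a subtree is discarded only when it provably cannot beat the incumbent, and the optimal leaf is never cut. The Lagrangian bound $l$ is used only opportunistically to strengthen this pruning or to order the search, and is consulted solely when a valid certificate is available, which is exactly the situation excluded when a timeout occurs.

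Next I would use the \emph{non-trivial} hypothesis to handle feasibility. Since each atomic algorithm can test whether a given path satisfies its constraints, at every leaf we can decide, reliably and independently of how the columns were priced, whether the completed path lies in $\bigcap_{\alpha\in\mathcal{A}} P_{\alpha}$. Consequently the incumbent is always a genuinely feasible path, and no infeasible path is ever accepted. Conversely, because infeasibility of a \emph{partial} path cannot in general be certified by a heuristic, I would insist that internal nodes are pruned for infeasibility only when an exact certificate is produced; when only a heuristic answer is available the node is kept, so no feasible completion is lost. Combining the complete enumeration of leaves, the conservative bound-based pruning, and the exact leaf-level feasibility test, the algorithm necessarily evaluates the optimal feasible elementary path and, by minimality of the incumbent, returns it.

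The main obstacle is precisely the interplay between heuristic pricing and pruning. A heuristic pricing oracle may fail to find a column of minimum reduced cost, so the value $l$ returned by \textbf{ACG-Solve} need not be a valid lower bound on the LP optimum and must not be used na\"ively to fathom a node. The delicate part of the argument is therefore to show that every correctness-critical decision—fathoming by bound and fathoming by infeasibility—rests only on information that remains valid under heuristic pricing (the nonnegative partial-path cost and the exact leaf feasibility check), while the heuristic quantities $p$ and $l$ serve solely to accelerate convergence. Once this separation is made explicit, completeness of the branching guarantees that the optimal solution is reached regardless of the pricing heuristic's strength, which is the claim.
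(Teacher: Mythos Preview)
Your argument is essentially the paper's: the arc-by-arc branching enumerates all elementary $s$--$t$ paths, and at each leaf the non-trivial heuristic certifies feasibility, so the optimal feasible path is eventually visited and recognized. The paper compresses this into a three-case analysis of the filtered subgraph $H=(V,\bar A)$: if $H$ is exactly an $s$--$t$ path, only the feasibility check is invoked; if $s$ and $t$ are disconnected, prune; otherwise continue branching. Completeness follows because every elementary path eventually falls into the first case.

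One caveat on your treatment of pruning. You correctly observe that, under heuristic pricing, the Lagrangian value $l$ returned by \textbf{ACG-Solve} need not be a valid lower bound, and that correctness therefore cannot rest on it. But your claim that $l$ is ``consulted solely when a valid certificate is available, which is exactly the situation excluded when a timeout occurs'' conflates the timeout mechanism with the exact/heuristic distinction and does not match Algorithm~\ref{update} as written, where $l$ is merged into $B.l$ whenever $|\bar A|/|A|\le\Gamma$, with no certificate required. The paper's proof sidesteps this entirely by arguing only at the level of the three cases above and never invoking bound-based fathoming; so you have correctly identified a subtlety of the concrete implementation, but you have asserted rather than established its resolution. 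Your core argument---complete enumeration plus exact leaf feasibility---is the right one and is what the paper relies on.
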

\begin{proof}
If $x$ is integer then it induces a path $p$ and a set of cycles not connected to $p$. As the objective function minimizes the cost of the solution, this implies that no cycle can be selected in a node of the branching tree. We need to consider at least a non-trivial heuristic for each pricing able to check if a path is feasible or not for its associated constraints $\alpha\in \mathcal{A}$. As the branching strategy consists in filtering links, we consider three cases on the subgraph $H=(V, \bar{A})$:
\begin{itemize}
    \item if $H$ is exactly a path between $s$ and $t$ then only the feasibility check is called,
    \item if $H$ does not contains a path from $s$ to $t$, i.e. $s$ and $t$ are disconnected, then the node is pruned,
    \item otherwise, continue the branching procedure.
\end{itemize}
Since the branching is complete, the result follows.
\hfill $\square$    
\end{proof}

\subsubsection{State(s) and functions}

Any state $B$ is composed of the tuple $B=(B.c,B.p,B.p^+,B.\bar{A},B.l)$, where $B.c$ represents the current cost of the solution and $B.p$ is the current (partial) path of the solution composed of a sequence of arcs. $B.p^+$ is a feasible elementary path from $s$ to $t$. It is opportunistically updated during the search as a path verified by all atomic algorithms, with respect to the current sub-path $B.p$ (hence, $B.p \subseteq B.p^+$). Finally, $B.\bar{A} $ is the set of remaining arcs that are still eligible to be chosen and $B.l$ denotes the lower bound for $B$, given by the previous choices on $B.p$. Note that, later in the paper, the function \textbf{Last} returns the last node of path $B.p$, or $s$ if the path is empty.

We will assume that  atomic algorithms provide the following functions for $\alpha \in \mathcal{A}$: 
$$
p,\text{opt},\text{unfeas}  \gets \textbf{Atomic$_\alpha$}(\bar{A},T_{atomic}) 
$$
where $p$ denotes a feasible path (empty if none has been found), and, "\text{opt}" (resp. "\text{unfeas}") equals \textit{true} if the algorithm can certify the optimality (resp. unfeasibility) of the returned path and \textit{false} if it cannot.
$\bar{A}$ denotes the set of eligible arcs that can be chosen in the returned solution, and $T_{atomic}$ is the time limit allocated to the atomic algorithm.

In practice, atomic algorithms may not support time limits and the option to work on a subset of arcs. A simple workaround to the first is to set a timeout in the function \text{\it waiting} for the answer. The second can be emulated by simply putting infinite costs on arcs that do not belong to  $B.\bar{A}$.

We denote \textbf{C($p$)} the cost evaluation function of path $p$, and $\argmin_\textbf{C}$ the function returning the element of minimum cost.

Finally, we call  
$$
\bar{A'}\gets \textbf{Filter($\bar{A},a$)} 
$$
the process that returns the updated set of eligible arcs $\bar{A'}$ that can be chosen without violating elementary path constraints, considering the initial set $\bar{A}$ from which we set an arc $a\in \bar{A}$.

\begin{algorithm}[hbt!]
\caption{Branching Scheme}\label{branch}
$B^0 \gets (c = 0, p= \emptyset, p^+= \emptyset, \bar{A}=A,l=0)$\;
$B^* \gets  (c=\infty, p = \emptyset, \bar{A}=A,l=0)$ \;
$p', l' \gets $\textbf{ACG-Solve($A$,$T_{acg}$)}\;\label{line:root}
\If{$p'\neq \emptyset$}{
$B^0 \gets (c = 0, p= \emptyset, p^+= p', \bar{A}=A,l=l')$\;
$B^* \gets  (c=\textbf{C}(p'), p = p', \bar{A}=A,l=l')$ \;
}
\text{PriorityQueue }$\gets \{B^0\}$\;
\While{\textrm{\text{PriorityQueue}}$ \neq \emptyset$}
    { 
    $B \gets$\textbf{Pop(}\text{PriorityQueue}\textbf{)}\;
    $n \gets $\textbf{Last(}$B.p$\textbf{)}\;
    \For{$a\in \delta^+(n) \cap B.\bar{A}$}{
        $\bar{A'} \gets\textbf{Filter(} B.\bar{A},a\textbf{)}$\;
        $B' \gets (B.c +c_a, B.p \cup \{a\},B.p^+ ,\bar{A'},B.l)$\;
        \If {$a \notin B.p^+$}{ \label{line:notinBplus}
            \textbf{Update($B',B^*$)}\;
        }
        \If {$B'.l<B^*.c$}{
            \textbf{Push(}\text{PriorityQueue},$B'$\textbf{)}
        }
   }
}
\Return $B^*$
\end{algorithm}

\begin{algorithm}
\caption{Update($B,B^*$)}\label{update} 
$B.l \gets B.c + \textbf{Shortest path value}(\textbf{Last}(B.p),t)$ \Comment*[r]{computed in the filtered graph $G=(V,\bar{A})$}
$c^+ \gets \infty$ \Comment*[r]{Best direction's cost}
\For{$\alpha \in \mathcal{A}$}{
    $p$,\text{opt},\text{unfeas}  $\gets \textbf{Atomic}_\alpha(B.\bar{A},T_{atomic})$\;
    \If{\textrm{unfeas}}{     \label{certif:start}
        $B.c \gets \infty$ \;
        $B.l \gets \infty$ \;
        \Return \Comment*[r]{Unfeasibility detection}
    }
    \If{\textrm{opt}}{
        $B.l \gets \max \{B.l,\textbf{C}(p)\}$\Comment*[r]{New lower bound}
        \If{\textrm{feasible(}$p$,$\mathcal{A}\setminus \{\alpha\}$\textrm{)}
        }{
            $B^*.p \gets \argmin_\textbf{C}\{ B^*.p,p\}$ \;
            $B^*.c \gets \textbf{C}(B^*.p)$
            \Comment*[r]{Update best}
            \Return \Comment*[r]{Local opt. detection}
        }
    } \label{certif:end}
\If{\textrm{feasible(}$p$,$\mathcal{A}\setminus \{\alpha\}$\textrm{)} and  \textrm{$\textbf{C}(p)<c^+$}}{
        $B^*.p \gets \argmin_\textbf{C}\{ B^*.p,p\}$ \;
            $B^*.c \gets \textbf{C}(B^*.p)$
            \Comment*[r]{Update best}
        $(B.p^+, c^+) \gets (p,\textbf{C}(p))$\Comment*[r]{Update direction}
    }
}
\If{$|B.\bar{A}|/|A|\leq \Gamma$}{
     $p',l  \gets $\textbf{ACG-Solve($\bar{A}$,$T_{acg}$)}\;
    $B.l \gets \max \{B.l,l\}$\Comment*[r]{Update lower bound}
    $B^*.p \gets \argmin_\textbf{C}\{ B^*.p,p'\}$ \;
    $B^*.c \gets \textbf{C}(B^*.p)$
            \Comment*[r]{Update best}
    \If{\textrm{$\textbf{C}(p')<c^+$}}{
        $B.p^+ \gets p'$\Comment*[r]{Update direction}
    }
}
\end{algorithm}

\subsubsection{Branching  (Algo.~\ref{branch})}

First, we initialize the best solution $B^*$ with an empty set and infinite cost, and the root solution $B^0$ with an empty path (meaning the source $s$ only) with cost $0$. Then, we apply \textbf{ACG-Solve} to extract lower and upper bounds (if applicable). We use the feasible solution  $B^0.p^+$ (if any) as a primary direction, as well as the lower bound $B^0.l$. Then, $B_0$ is pushed to a PriorityQueue to bootstrap the search. Until this queue is empty, we \textbf{Pop} out the solution $B$ with the most promising lower bound $B.l$, and choose an arc, expanding the path with respect to the elementary path constraint, calling \textbf{Filter}, to generate a new solution $B'$. If this arc was not in a previous feasible path $B.p^+$, we \textbf{Update} the solution, which can lead to a new lower bound and (possibly) a new feasible path. Finally, we check if this direction has a chance to produce a better solution and, if so, we \textbf{Push} it inside the queue. At the end, $B^*$ is the optimal solution.

\subsubsection{Update (Algo.~\ref{update})}

This procedure intends to update the different components of $B$, regarding a set of tests, as well as the best solution found so far $B^*$. At initialization, the most trivial lower bound can be computed by completing the current path $B.p$ of cost $B.c$ with the shortest path value to $t$. We also set the best feasible path cost to infinite. 

Then, we run each atomic algorithm within the time limit, considering the filtered arcs of $B$. If one of the atomic algorithms is able to prove the unfeasibility of the instance generated by $B$, then an infinite cost is set to $B$ and the process stops. On the contrary, if one of the atomic algorithms generated an optimal solution (and is able to certify it), then the lower bound $B.l$ is updated consequently. Moreover, if this solution is feasible for all other atomic algorithms, this solution is proved to be optimal for this branch and we only have to update the best solution $B^*$, before it is discarded. 

If a feasible solution is produced by at least one atomic algorithm, then we keep the one with the lowest cost; it becomes the most promising feasible path $B.p^+$ to carry on for this branch. Finally, if the proportion of remaining arcs $B.\bar{A}$ over the initial ones $A$ is sufficiently low (lower or equal to $\Gamma$, the ratio parameter), we compute \textbf{ACG-Solve} to update (if relevant) lower bound $B.l$, promising path $B.p^+$ and global solution $B^*$. This threshold $\Gamma$ is designed to run \textbf{ACG-Solve} only when the graph is sufficiently reduced, i.e., when it performs best.

On lines 10 and 14 the feasible($p$,$\mathcal{A}'$) function returns true if the path $p$ is feasible for each atomic algorithm of the set $\mathcal{A}'$. The function feasible($p$,$\mathcal{A}'$) checks the feasibility of the path $p$ by using the feasible check of each atomic algorithm $\mathcal{A}'$.

It is interesting to note that, even if atomic algorithms are heuristic, this branching scheme remains optimal. In this case, atomic algorithms would return \textit{false} for both \textit{unfeas} and \textit{opt} certificates, except when $\bar{A}$ is only composed of a single path (evaluation capability). Although the branching scheme remains complete, getting these certificates can reduce the global computational time, as we will see in Sec.~\ref{sec:results}.

\section{Generalized Decomposition}
\label{sec:general}

We now present a generic decomposition, called \text{$Generic$-$ACG$}, from which we derived ACG. It is based on a Dantzig-Wolfe decomposition over a consensus-based model where some constraints and variables are duplicated to ensure
convergence among sub-problems.
We analyze, theoretically, its properties and introduce a Branch-and-Price algorithm to solve it optimally.

The advantage of $Generic$-$ACG$ is that each pricing can be composed of all structural constraints and one or multiple additional constraints. Therefore, sub-problems can be solved with a dedicated algorithm working on the fundamental structure of the solution. A consensus solution is found among solutions of sub-problems.

\subsection{Problem Definition}

Let us define the combinatorial optimization problem $\Pi$ as follows: consider a basic finite set $E$, a cost function $c: E\rightarrow \mathbb R$ and a family $\mathcal{F}$ of subsets of $E$, find a set $F\in \mathcal{F}$ of minimum cost, i.e., $\min\{ \sum_{e\in F} c_e: F\in \mathcal{F}\}$. 
An integer program associated with problem $\Pi$ needs to define a set of binary variables $x_e$ which equal 1 if element $e$ is selected in the solution and 0 otherwise, for each $e\in E$. 

Let $S$ and $K$ be the sets of structural and additional constraints, respectively, $\Pi$ is equivalent to the $Compact$ binary program :  
\begin{align}
& \label{comp:obj} \min \sum_{e\in E} c_ex_e \\
& \label{comp:ct_base}  \sum_{e\in E} a_{e,i} x_e\geq b_i & & \forall i\in S,\\
& \label{comp:ct}  f^j(x)\geq 0 & & \forall j\in K.\\
& \label{comp:int} x_e\in \{0,1\} & & \forall e\in E
\end{align}
where  \eqref{comp:obj} is the objective function that minimizes the solution cost. Inequalities \eqref{comp:ct_base} represent the structural constraints and \eqref{comp:ct} the additional constraints. The function $f^j(x)$ has no specific form as it may be a black box, a non-linear function, etc.
Let us denote by $Compact$-$RL$ the continuous relaxation of the $Compact$ model where \eqref{comp:int} are replaced by $x_e\geq 0$ and $x_e\leq 1$, for all $e\in E$. \\
In Sec.~\ref{sec:optim}, $S$ represents the set of the shortest path constraints (flow conservation and sub-tour elimination), whereas, $K$ represents the union of all constraints associated with all atomic algorithms.

\subsection{Generic ACG Model}
We present a reformulation as another binary program with an exponential number of variables. The aim of this model is to strengthen the continuous relaxation and provide the ability to use atomic algorithms that are efficiently designed for a subset of additional constraints. Note that, if we apply a classical Dantzig-Wolfe decomposition on problem $\Pi$, by keeping Constraints \eqref{comp:ct_base} in the master problem, then all additional constraints \eqref{comp:ct} belong, together, to the same pricing problem (since they share common variables $x$). To consider one pricing problem for each additional constraint we need to duplicate constraints and variables as shown below. 
 
First, let us define a consensus-based model obtained from 
the compact model associated with the problem $\Pi$ by removing \eqref{comp:ct} and adding new variables $x^j_e\in \{0,1\}$ for $j\in K$ and $e\in E$ and the following constraints: 
\begin{align}  
& \label{comp_ext:ct_linking}  x_e - x^j_e = 0 & & \forall j\in K, e\in E,\\
& \label{comp_ext:ct_ressources_2_base}  \sum_{e\in E} a_{e,i} x^j_e\geq b_i & &  \forall j\in K, i\in S,\\
& \label{comp_ext:ct_ressources_2}  f^j(x^j)\geq 0 & & \forall j\in K.
\end{align} 
Equalities \eqref{comp_ext:ct_linking} guarantee that the global solution satisfies all additional constraints. Constraints \eqref{comp_ext:ct_ressources_2_base} are redundant as they are equivalent to \eqref{comp:ct_base}, but they will be useful after the decomposition to enhance the continuous relaxation. Constraints \eqref{comp_ext:ct_ressources_2} represent the additional constraints using the new variables. 

For $j\in K$, let $\mathcal{F}_j$ be the set of solutions respecting \eqref{comp_ext:ct_ressources_2_base} and \eqref{comp_ext:ct_ressources_2} associated to $j$. Note that $\mathcal{F}\subseteq \mathcal{F}_j$. For GCSP (see definition in Sec.~\ref{sec:system}) the set $\mathcal{F}_j$ represents the set of all solutions of PCE $j$.
Let us also define, for each $j\in K$, a new variable $y_F^j$, that equals $1$ if $F\in \mathcal{F}_j$ is selected in the solution, $0$ otherwise.  

Now, let us introduce the \textit{$Generic$-$ACG$ Model ($Generic$-$ACG$-$M$)}, applying Dantzig-Wolfe over the consensus-based model by keeping \eqref{comp:ct_base} and \eqref{comp_ext:ct_linking} in the master problem:     
\begin{align}
& & &\label{ACG:Obj} \min \sum_{e\in F} c_ex_e \\
& & & \label{ACG:S}   \sum_{e\in E} a_{e,i} x_e \geq b_i & & \forall i\in S,\\ 
&\beta_j: & &  \label{ACG:Conv} \sum_{F\in  \mathcal{F}_j} y_F^j=1 & & \forall j\in K,\\
&\gamma_{e,j}: & &  \label{ACG:Rj} x_e - \sum_{F\in  \mathcal{F}_j: e\in F} y^j_{F} = 0& &\forall j\in K, \forall e\in E.\\
& & & \label{ACG:x} x_e\in \{0,1\}& & \forall e\in E,\\
& & & \label{ACG:y} y^j_F\in \{0,1\} & & \forall j\in K, \forall F\in  \mathcal{F}_j.
\end{align}  
 \eqref{ACG:Obj}-\eqref{ACG:S} are equivalent to \eqref{comp:obj}-\eqref{comp:ct_base}. Equalities \eqref{ACG:Conv} ensure that for each additional constraint, exactly one solution is selected. Equalities \eqref{ACG:Rj} guarantee all additional constraints managed by a pricing problem fit with the original variables $x$. 
Let us denote by $Generic$-$ACG$-$M$-$RL$ the linear relaxation where integrality constraints \eqref{ACG:x} and \eqref{ACG:y} are replaced by $x_e\geq 0$, for all $e\in E$, and $y^j_F\geq 0$, for all  $j\in K$ and $F\in  \mathcal{F}_j$.
Remark that due to constraints \eqref{ACG:Conv} it is not necessary to add $y^j_F\leq 1$, for all $j\in K$. We can deduce, due to the constraints \eqref{ACG:Rj} that if $y$ variables are bound by 1 then $x$ variables are also bounded by 1.

\paragraph{Models analysis}
\begin{proposition} \label{ACG_Compact_RL}
\label{prop_relax}
$Generic$-$ACG$-$M$-$RL$ is stronger than or equal to $Compact$-$RL$.   
\end{proposition}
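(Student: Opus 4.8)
The plan is to prove the statement by a projection argument: I will show that every feasible point of $Generic$-$ACG$-$M$-$RL$, once projected onto the $x$ variables, is feasible for $Compact$-$RL$ and carries the same objective value. Since both programs minimize the identical linear objective $\sum_{e} c_e x_e$, a containment of the projected feasible region of $Generic$-$ACG$-$M$-$RL$ inside that of $Compact$-$RL$ immediately yields that the optimal value of the former is at least that of the latter (we minimize over a smaller set), which is exactly the claimed relation.

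So I would fix a feasible solution $(x,y)$ of $Generic$-$ACG$-$M$-$RL$ and examine its $x$-component. The objective is literally the same expression in both models, so values match. The structural constraints \eqref{comp:ct_base} of $Compact$-$RL$ are copied verbatim as \eqref{ACG:S}, hence satisfied. For the box constraints, $x_e \ge 0$ is imposed directly, while the upper bound $x_e \le 1$ follows from the remark stated just after the model, combining the convexity rows \eqref{ACG:Conv} with the linking rows \eqref{ACG:Rj}. These three items are routine.

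The heart of the argument, and the step I expect to be the main obstacle, is recovering the additional constraints \eqref{comp:ct}, i.e. $f^j(x) \ge 0$ for each $j \in K$. Here I would use the linking equalities \eqref{ACG:Rj} together with the convexity equalities \eqref{ACG:Conv} to write, for each fixed $j$, the vector $x$ as a convex combination $x = \sum_{F \in \mathcal{F}_j} y^j_F \, \chi_F$ of the incidence vectors $\chi_F$ of the sets $F \in \mathcal{F}_j$, with nonnegative weights summing to one. By the definition of $\mathcal{F}_j$ via \eqref{comp_ext:ct_ressources_2}, every such $F$ satisfies $f^j(\chi_F) \ge 0$, so $x$ lies in $\operatorname{conv}(\mathcal{F}_j)$. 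To conclude $f^j(x) \ge 0$ I then need the region $\{x : f^j(x) \ge 0\}$ to contain this convex hull.

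This is precisely where care is required: for a general black-box or non-linear $f^j$, a convex combination of feasible points need not be feasible, so the inclusion $\operatorname{conv}(\mathcal{F}_j) \subseteq \{x : f^j(x) \ge 0\}$ can fail unless that region is convex. For the additive resource, bound and inclusion constraints underlying GCSP --- and in particular for the redundant structural rows \eqref{comp_ext:ct_ressources_2_base}, which are linear --- the region is a half-space or polytope, so the inclusion holds automatically and the convex-combination step goes through. I would therefore carry out the proof under the (mild, and here always satisfied) assumption that each $\{x : f^j(x) \ge 0\}$ is convex, equivalently $f^j$ concave; the gap between ``stronger'' and merely ``equal'' then comes from the familiar Dantzig--Wolfe phenomenon that $\operatorname{conv}(\mathcal{F}_j)$ can be strictly smaller than the relaxed feasible region of the $j$-th subproblem, tightening the bound.
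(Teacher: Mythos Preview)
Your projection argument is the same route the paper takes: fix a feasible $(x,y)$, observe that the objective and the structural rows carry over verbatim, and use the linking equalities \eqref{ACG:Rj} with the convexity equalities \eqref{ACG:Conv} to express $x$ as a convex combination of incidence vectors from $\mathcal{F}_j$, then conclude feasibility for \eqref{comp:ct}. The paper compresses this last step into the sentence ``$x^*$ is a convex combination of all solutions in $\mathcal{F}_j$, therefore $x^*$ satisfies all Constraints \eqref{comp:ct}'', without flagging any hypothesis on $f^j$.

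You are right to pause there: for a genuinely arbitrary black-box $f^j$ the implication $x \in \operatorname{conv}(\mathcal{F}_j) \Rightarrow f^j(x)\ge 0$ need not hold, and the paper's proof silently relies on the convexity of each region $\{x:f^j(x)\ge 0\}$ (equivalently, concavity of $f^j$), which is satisfied in all the linear resource/inclusion settings actually used but is not guaranteed by the general model as written. So on this point your proposal is, if anything, more careful than the paper.

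What your proposal lacks relative to the paper is the second half: the paper does not stop at ``$\ge$'' but exhibits a concrete RCSP instance on four nodes with two upper-bound resources for which the $Generic$-$ACG$-$M$-$RL$ value is $4$ while $Compact$-$RL$ attains $3$, thereby certifying that the inequality can be strict. Your closing remark about the ``familiar Dantzig--Wolfe phenomenon'' gestures at this but does not furnish the example; to match the paper's claim you should add such an instance.
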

\begin{proof}
Let $(x^*, y^*)\in [0,1]$ be a solution of the linear system of \eqref{ACG:S}-\eqref{ACG:Rj} and trivial inequalities. It is easy to see that $ x^*$ satisfies constraints \eqref{comp:ct_base} associated with $\mathcal S$. Moreover, by \eqref{ACG:Conv}, $x^*$ is a convex combination of all solutions in $\mathcal F_j$. Therefore, $x^*$ satisfies all Constraints \eqref{comp:ct}. Hence, $ x^*$ is a feasible solution for the $Compact$-$RL$, and thus the $Generic$-$ACG$-$M$-$RL$ is at least equal to the $Compact$-$RL$. Now let us show an example where $Generic$-$ACG$-$M$-$RL$ is stronger than $Compact$-$RL$. To show that the 
$Generic$-$ACG$-$M$-$RL$ may be strictly stronger than $Compact$-$RL$, we consider an instance of the RCSP with two upper-bound constraints. Consider a graph $G=(\{s,u,v,t\}, \{su, sv,ut,vt\})$ and two resources with upper-bounds $U_1=12$ and $U_2=9$. Edges $su$ and $ut$ have, both, a consumption of $3$ for each resource, and cost of $2$. Edges $sv$ and $vt$ have, both, a consumption of $6$, for each resource, and a cost of $1$.
Each pricing problem in the $Generic$-$ACG$-$M$-$RL$ represents a shortest-path problem with one capacity constraint. For resource 1, two paths can be generated: $\{su,ut\}$ and $\{sv,vt\}$. For resource 2, one path can be generated: $\{su,ut\}$. By \eqref{ACG:Rj}, it follows that $y^1_{\{su,ut\}} = y^2_{\{su,ut\}} = 1$ and $x_{su}=x_{ut}=1$ is the optimal solution with a value of $4$. For the $Compact$-$RL$, the solution $x_{su}=x_{ut}=x_{sv} = x_{vt} = 0.5$ satisfies the conservation flow constraints at each node and the resource capacities. Moreover, its objective value is $3$ which is lower than the optimal value of $Generic$-$ACG$-$M$-$RL$. Therefore, in this example, $Generic$-$ACG$-$M$ has a strictly stronger linear relaxation than $Compact$ model. This ends the proof. \hfill $\square$
\end{proof}

Let us consider the classical Dantzig-Wolfe decomposition on the $Compact$ model given by the $DW$-$M$ model:
\begin{align}
& \label{DW:obj} \min \sum_{e\in E} c_e\sum_{F\in \mathcal{F}:e\in F}y_F \\
& \label{DW:ct_conv} \sum_{F\in \mathcal{F}}y_F=1&&\\ 
& \label{DW:ct_base}  \sum_{e\in E} a_{e,i} \sum_{F\in \mathcal{F}:e\in F}y_F\geq b_i & & \forall i\in S,\\
& \label{DW:int} y_F\in \{0,1\} & & \forall F\in \mathcal{F}
\end{align}
we denote by $DW$-$M$-$RL$ its linear relaxation, where integrality constraints \eqref{DW:int} are replaced by $y_F\geq 0$ for all $F\in \mathcal{F}$. Clearly $y_F\leq 1$ for all $F\in \mathcal{F}$ are induced by constraints \eqref{DW:ct_conv}.
\begin{proposition} 
$Generic$-$ACG$-$M$-$RL$ may be strictly stronger than $DW$-$M$-$RL$.
\end{proposition}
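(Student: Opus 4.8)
The plan is to prove this by exhibiting a single instance on which the two relaxations differ, and the natural choice is to reuse the RCSP instance already constructed in the proof of Proposition~\ref{prop_relax}. On that instance we have already established that $Generic$-$ACG$-$M$-$RL$ attains the value $4$: the pricing for the second resource admits only the single path $\{su,ut\}$, so the consensus constraints \eqref{ACG:Rj} force $x=\chi^{\{su,ut\}}$ and hence a cost of $4$. It therefore suffices to show that the classical decomposition $DW$-$M$-$RL$ attains a strictly smaller value on the same instance.

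The key step is to identify correctly the column set $\mathcal F$ appearing in $DW$-$M$. Since the classical decomposition keeps the structural constraints \eqref{comp:ct_base} (here flow conservation and subtour elimination) in the master and pushes all additional constraints together into one pricing, the columns $F\in\mathcal F$ are the integer points satisfying only the two resource bounds; crucially they need \emph{not} be $s$--$t$ paths. In particular every two-arc set of total consumption at most $9$ is a legal column, including the non-path sets $\{su,sv\}$, $\{su,vt\}$, $\{ut,sv\}$ and $\{ut,vt\}$ (each has consumption $3+6=9\le 9\le 12$ for both resources).

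I would then simply display a fractional master solution of cost $3$. Put weight $\tfrac14$ on each of those four columns and zero elsewhere. The convexity constraint \eqref{DW:ct_conv} holds since the weights sum to $1$, and the induced arc values are $x=(\tfrac12,\tfrac12,\tfrac12,\tfrac12)$ (in the order $su,sv,ut,vt$), which is a valid $s$--$t$ flow of value $1$ and hence satisfies the structural constraints \eqref{DW:ct_base} kept in the master. Its objective value is $2\cdot\tfrac12+1\cdot\tfrac12+2\cdot\tfrac12+1\cdot\tfrac12=3$. Therefore $DW$-$M$-$RL\le 3<4=Generic$-$ACG$-$M$-$RL$, which proves the claim. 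If one wants the exact value, the standard fact that a Dantzig--Wolfe bound dominates the compact LP bound gives $DW$-$M$-$RL\ge Compact$-$RL=3$, so in fact $DW$-$M$-$RL=3$, and the gap against $Generic$-$ACG$-$M$-$RL$ coincides with the one established in Proposition~\ref{prop_relax}.

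The only genuine subtlety, and the step I would be most careful about, is this identification of $\mathcal F$: one must resist letting the columns be feasible paths. The whole reason $Generic$-$ACG$ is stronger here is that it duplicates the structural constraints into every pricing, so each $\mathcal F_j$ consists of actual paths and the consensus constraints \eqref{ACG:Rj} glue them into one common feasible path; the classical decomposition, by contrast, lets the master recombine non-path resource-feasible arc sets, which is precisely what enables the cheaper fractional flow above.
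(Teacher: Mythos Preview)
Your proof is correct and follows essentially the same route as the paper: the same RCSP instance, the same key observation that in the classical decomposition the pricing is a pure multi-knapsack so columns need not be $s$--$t$ paths, and the same resulting fractional master solution $x=(\tfrac12,\tfrac12,\tfrac12,\tfrac12)$ of cost $3$ against the $Generic$-$ACG$-$M$-$RL$ value $4$. The only cosmetic difference is that the paper reaches this $x$ with two non-path columns $\{su,vt\}$ and $\{sv,ut\}$ at weight $\tfrac12$ each, whereas you use four columns at weight $\tfrac14$; your extra remark that $DW$-$M$-$RL\ge Compact$-$RL=3$ pins down the exact value, which the paper leaves implicit.
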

\begin{proof}A RCSP instance with two upper-bound constraints is enough to show the proposition. Let consider the same example as in the proof of Prop. \ref{ACG_Compact_RL}. As shown previously, the optimal value of $Generic$-$ACG$-$M$-$RL$ is $4$. 
For the $DW$-$M$-$RL$ the pricing problem is a multi-knapsack problem with two knapsack constraints. In the optimal solution of the linear relaxation, two generated columns have values $y_{\{su,vt\}} = y_{\{sv,ut\}} = 0.5$. This corresponds to solution $x_{su}=x_{ut}=x_{sv} = x_{vt} = 0.5$ in the compact model. Therefore, in this example, $Generic$-$ACG$-$M$ has a strictly stronger linear relaxation than $DW$-$M$-$RL$. This ends the proof.   \hfill $\square$
\end{proof}

\paragraph{Problem Properties}  
The equality constraint in \eqref{ACG:Rj} results in both negative and positive dual costs. This introduces complexity in optimization problems, particularly when dealing with mixed-sign costs. For example, in the GCSP, the pricing problems are variants of the shortest-path problem that requires positive costs. In the following, we show how to avoid mixed-sign dual costs.
\\
\begin{definition}\label{cond1}
Optimization problem $\Pi$ is said to be \textit{union-free}, if and only if, for each pair of constraints $j_1, j_2 \in K$, there does not exist a solution $F_{j_1}\in \mathcal{F}_{j_1}$ and $F_{j_2}\in \mathcal{F}_{j_2}$ such that $F_{j_1} \neq F_{j_2}$ and $(F_{j_1}\cup F_{j_2})\in \mathcal{F}$.
\end{definition}
For example, GCSP is union-free since the union of two different paths from the same source and destination cannot be a path. On the other hand, the Knapsack problem is not union-free. Indeed, it is easy to see that the union of two knapsack solutions may satisfy the knapsack constraint.

\begin{proposition}\label{prop:1}
If problem $\Pi$ is union-free then, in $Generic$-$ACG$-$M$, Equalities \eqref{ACG:Rj} can be converted to $"\geq"$.
\end{proposition}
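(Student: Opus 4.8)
The plan is to prove that the integer program $Generic$-$ACG$-$M$ and its weakening, obtained by replacing the equalities \eqref{ACG:Rj} with ``$\ge$'', have the same set of optimal solutions; this is what legitimizes working with a sign-constrained dual $\gamma_{e,j}\ge 0$ instead of a free-sign one. One direction is immediate: relaxing $=$ into $\ge$ only enlarges the feasible set, so the $\ge$-model is a relaxation and its optimum is at most that of the equality model (the objective \eqref{ACG:Obj} is unchanged). Hence the whole content lies in the converse, namely that \emph{every} optimal integer solution of the $\ge$-model already satisfies \eqref{ACG:Rj} with equality, so that it is in fact feasible, at the same cost, for the equality model.

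I would fix an optimal integer solution $(x,y)$ of the $\ge$-model and read off its combinatorial structure. By the convexity constraints \eqref{ACG:Conv} together with integrality, for each additional constraint $j\in K$ exactly one set $F_j\in\mathcal{F}_j$ has $y^j_{F_j}=1$. The ``$\ge$'' form of \eqref{ACG:Rj} then reads $x_e\ge[e\in F_j]$, i.e. the support $G:=\{e:x_e=1\}$ satisfies $F_j\subseteq G$ for every $j$. The target is to show $G=F_j$ for all $j$: once this holds, \eqref{ACG:Rj} is tight and $(x,y)$ is feasible and optimal for the equality model. This is the abstract counterpart of Prop.~\ref{propo1}, where the out-degree constraints \eqref{ACGRCSP:tree} and the positivity of the arc costs force the support of $x$ to collapse to a single elementary path coinciding with each $F_j$.

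The crux is to combine the structural constraints \eqref{ACG:S} with the union-free hypothesis of Def.~\ref{cond1}. I would first argue that, at optimality, $G$ carries no element outside $\bigcup_j F_j$, so that $G=\bigcup_j F_j$ is itself a feasible member of $\mathcal{F}$. Granting this, union-freeness finishes the argument: if two pricings disagreed, $F_{j_1}\neq F_{j_2}$, then both would sit inside the feasible set $G$ and their union $F_{j_1}\cup F_{j_2}$ would again be a feasible solution in $\mathcal{F}$, contradicting Def.~\ref{cond1}; iterating, all $F_j$ collapse to a common $F^\star$ and $G=F^\star$, giving equality in \eqref{ACG:Rj}. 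I expect the genuine obstacle to be the first step, i.e. certifying that an optimal support contains no redundant elements and therefore lies in $\mathcal{F}$. In the GCSP instance this is exactly what \eqref{ACGRCSP:tree} together with $c\ge 0$ achieve (a disconnected circuit can never appear in an optimal solution); in the generic setting it is precisely here that the structural constraints \eqref{ACG:S} and the cost structure must be invoked, before the union-free property can be used to conclude that all pricing solutions agree with the consensus support.
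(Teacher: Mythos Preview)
Your approach diverges from the paper's: you argue through \emph{optimality}, invoking cost structure to rule out redundant elements in the support $G=\{e:x_e=1\}$, whereas the paper gives a pure feasibility contradiction that never touches the objective. The paper supposes some integer $(x,y)$ feasible for the $\ge$-model has $x$ violating an additional constraint $j^*\in K$; it then argues (tersely) that this forces two distinct atomic solutions to sit inside the support of $x$, contradicting union-freeness directly. This sidesteps the ``no redundant elements at optimality'' obstacle you flag---which, in any case, would require $c_e\ge 0$, an assumption the generic setting ($c:E\to\mathbb R$) does not make.

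There is also a genuine gap beyond the one you acknowledge. You assert that once $G=\bigcup_j F_j$ it ``is itself a feasible member of $\mathcal{F}$''. This is unjustified: $G$ satisfies the structural constraints \eqref{ACG:S} because $x$ does, but membership in $\mathcal F$ also requires every additional constraint \eqref{comp:ct} to hold on $G$, and nothing so far delivers that---knowing $F_j\in\mathcal F_j$ says nothing about whether the strictly larger set $G\supseteq F_j$ still lies in $\mathcal F_j$. Without $G\in\mathcal F$ you cannot trigger Def.~\ref{cond1} as you propose; and even granting it, the step from $F_{j_1},F_{j_2}\subseteq G\in\mathcal F$ to $F_{j_1}\cup F_{j_2}\in\mathcal F$ is not automatic when $|K|>2$, since $F_{j_1}\cup F_{j_2}$ may be a proper subset of $G$ and $\mathcal F$ is not downward-closed. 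The paper avoids these issues by working contrapositively from the start, never needing to establish $G\in\mathcal F$.
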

\begin{proof}
Suppose that in $Generic$-$ACG$-$M$, Equalities \eqref{ACG:Rj} cannot be converted to $"\geq"$. It follows that, after conversion, there exists a solution $(x,y)$ that satisfies $\eqref{ACG:S}-\eqref{ACG:Rj}$ but for at least one $j\in K$, $x$ violates \eqref{comp:ct}. Clearly, by \eqref{ACG:Rj}, solutions of two different atomic algorithms are included in the solution associated with $x$. This implies that $\Pi$ is not union-free and the result follows.\hfill $\square$
\end{proof}
 
\begin{proposition}\label{propnogood}
If problem $\Pi$ is not union-free, equalities \eqref{ACG:Rj} can be converted to $"\geq"$ after adding the following constraints
\begin{align}
\sum_{e\in  {F'}}x_e \leq | {F'}|-1 + \sum_{e\in E\setminus  {F'}}x_e,\qquad  {F'} \in  {\mathcal{F'}} \label{no_good_cut}
\end{align}
where $ {\mathcal{F'}}$ is the set of unfeasible solutions, i.e., $ {\mathcal{F'}}=\{E':E'\subseteq E, E'\notin\mathcal{F}\}$. 
\end{proposition}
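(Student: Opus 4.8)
The plan is to reproduce the structure of the proof of Proposition~\ref{prop:1} and argue by contradiction, the only genuinely new ingredient being the role of the no-good cuts \eqref{no_good_cut}. Recall that ``converting \eqref{ACG:Rj} to $\geq$ is valid'' means, exactly as in Proposition~\ref{prop:1}, that every integer solution of the relaxed model still induces an $x$ satisfying the original additional constraints \eqref{comp:ct}; equivalently, that the support $X := \{e\in E : x_e=1\}$ always lies in $\mathcal{F}$. When $\Pi$ is not union-free, relaxing \eqref{ACG:Rj} to $\geq$ can create integer solutions in which each pricing $j$ selects a set $F_j\subseteq X$ but $X$ itself is infeasible, so the aim is to show that the cuts \eqref{no_good_cut} remove precisely these spurious points and nothing else.

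First I would record the combinatorial effect of a single cut. For a fixed $F'\in\mathcal{F}'$ and any $0/1$ vector $x$ with support $X$, inequality \eqref{no_good_cut} rewrites as $|F'\cap X| - |X\setminus F'| \leq |F'|-1$. Since $|F'\cap X|=|F'|$ if and only if $F'\subseteq X$, a one-line case analysis (split on whether $F'\subseteq X$ or not) shows that \eqref{no_good_cut} is violated if and only if $X=F'$. Consequently, adding \eqref{no_good_cut} for every $F'\in\mathcal{F}'$ forbids exactly those integer points whose support is an infeasible set, while leaving untouched every $x$ with $X\in\mathcal{F}$ (such an $x$ can never coincide with any $F'\in\mathcal{F}'$).

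The main direction then follows by contradiction. Suppose that, after relaxing \eqref{ACG:Rj} to $\geq$ and adding all cuts, there is an integer solution $(x,y)$ satisfying \eqref{ACG:S}, \eqref{ACG:Conv}, the relaxed \eqref{ACG:Rj} and \eqref{no_good_cut}, whose $x$ violates \eqref{comp:ct}, i.e. $X\notin\mathcal{F}$, hence $X\in\mathcal{F}'$. Then the cut \eqref{no_good_cut} instantiated at $F'=X$ belongs to the model, and evaluating it at $x=\mathbf{1}_{X}$ gives left-hand side $|X|$ and right-hand side $|X|-1$, a contradiction. Thus no such solution exists, so every integer $x$ of the converted model satisfies $X\in\mathcal{F}$, which is the claim. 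I would stress that integrality of $x$ is essential here: these are integer (no-good) cuts and do not strengthen the continuous relaxation.

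For completeness I would close the equivalence of the two integer programs. The easy inclusion is that any solution of the original ($=$) model satisfies the relaxed $\geq$ constraints, and by \eqref{ACG:Conv}--\eqref{ACG:Rj} its unique selected column in each pricing equals $X$, so $X\in\bigcap_{j}\mathcal{F}_j=\mathcal{F}$ and all cuts hold; conversely, given a converted-feasible $(x,y)$ with $X\in\mathcal{F}\subseteq\mathcal{F}_j$, one rebuilds an equality-feasible solution with the same $x$ by setting $\tilde{y}^j_X=1$ for every $j\in K$, and since the objective \eqref{ACG:Obj} depends on $x$ alone the optimal values coincide. The only place demanding care, and hence the main obstacle, is the exactness statement of the second paragraph—verifying that \eqref{no_good_cut} cuts off $\mathbf{1}_{F'}$ and no feasible integer point; once this is established, the contradiction step is immediate.
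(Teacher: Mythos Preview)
Your proof is correct and follows essentially the same approach as the paper: both argue that the no-good cuts \eqref{no_good_cut} eliminate exactly those integer points whose support is an infeasible set and nothing else, so that after relaxing \eqref{ACG:Rj} to ``$\geq$'' only $x$ with $X\in\mathcal{F}$ survive. The paper's proof is a two-sentence sketch that simply appeals to the standard no-good-cut property, whereas you actually verify this property via the case split on $F'\subseteq X$ and add the reverse inclusion showing the two integer programs have the same optimal value; this makes your argument more complete but not genuinely different.
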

\begin{proof}
Inequalities \eqref{no_good_cut} are the well-known no-good cuts~\cite{bockmayr2003detecting}, thus they ensure that the solution provided by $x$ is valid. Indeed, for any unfeasible integer vector of $x$ the generated no-good-cut only discards this integer vector $x$. It remains valid for any other integer solution in the original model. Thus, only a feasible integer vector can be found at the end of the optimization.  \hfill $\square$
\end{proof}

Note that sub-tour elimination constraints \cite{subtour} are strengthened no-good cuts \eqref{no_good_cut} for removing cycles in the solution. Indeed for cycles $\sum_{e\in E\setminus {F'}}x_e =0$ and cuts can be written as follows
\begin{align}
\sum_{e\in  {F'}}x_e \leq | {F'}|-1,\qquad  {F'} \in  {\mathcal{F'}} \label{no_good_cutSub}
\end{align}
where $\mathcal{F'}$ represents all sub-tour in the graph. Moreover, no-good cuts can be further improved. For a given problem, analyzing structural constraints allows to strengthen them. For instance, in $ACG\text{-}M$, we can remark that the flow conservation and the sub-tour elimination constraints are replaced by Constraints \eqref{ACGRCSP:tree}.

\subsection{Branch-and-Price Algorithm} 

As $Generic$-$ACG$-$M$ involves an exponential number of variables, 
a Branch-and-Price algorithm ($Generic$-$ACG$-$A$) can solve it, using Column Generation and Branch-and-Bound.
In this section, we refer for \textit{Restricted Master Problem (RMP)} the $Generic$-$ACG$-$RL$, restricted to a subset of variables $y^j$, for each $j\in K$.\\
\paragraph{Column Generation}
It, iteratively, consists in solving the RMP and the pricing problems.

    \textbf{Restricted Master Problem:} It combines solutions of all sub-problems (atomic algorithms) in order to find an optimal consensus solution that respects all additional constraints. At each resolution, we obtain the dual solutions that will drive the sub-problems to converge. 
    
    \textbf{Pricing problem:} For each $j\in K$, this problem looks for a set $F\in \mathcal F_j$, such that the associated variable $y^F_j$ is able to improve the linear relaxation of RMP. Let $\beta$ and $\gamma$ be the dual solution associated to Constraints \eqref{ACG:Conv} and \eqref{ACG:Rj}. 
    For each $j\in K$, the pricing problem is equivalent to solving  $\min\{ \sum_{e\in F } {\gamma}_{e,j}- {\beta}_j :F \in \mathcal{F}_j\}$ and if the value is negative then the associated variable is added to the current RMP.  
    
 \paragraph{Branching Scheme} 
If the solution of the RMP is not integer, then we need to branch to obtain an integer solution. In the following proposition, we show that branching only on variables $x$ is enough. Branching on $x$ allows us to, drastically reduce the size of the branching tree, compared to branching on $y$, as they are in polynomial number.

\begin{proposition}\label{intx1}
If $\Pi$ is union-free, and vector $x$ is integer, then 
so is $y$.
\end{proposition}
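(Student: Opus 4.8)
The plan is to fix an integer feasible $x$, read off the set $F^\star=\{e\in E: x_e=1\}$ that it encodes, and show that for every $j\in K$ the convexity and linking constraints force $y^j$ to put all its mass on the single set $F^\star$. Since the objective and the structural block \eqref{ACG:S} involve only $x$, the entire content of the claim lives in \eqref{ACG:Conv}--\eqref{ACG:Rj}, so I would argue exclusively with those two families together with $y\ge 0$.

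First I would fix $j$ and split the elements of $E$ according to the value of $x$. For $e$ with $x_e=0$, the linking equality \eqref{ACG:Rj} gives $\sum_{F\in\mathcal F_j:\,e\in F}y^j_F=0$, and since all $y^j_F\ge 0$ this forces $y^j_F=0$ whenever $e\in F$; hence every $F$ in the support of $y^j$ satisfies $F\subseteq F^\star$. For $e$ with $x_e=1$, \eqref{ACG:Rj} gives $\sum_{F:\,e\in F}y^j_F=1$, and comparing this with the convexity equality \eqref{ACG:Conv}, namely $\sum_{F}y^j_F=1$, shows $\sum_{F:\,e\notin F}y^j_F=0$, so every $F$ in the support contains $e$; hence $F\supseteq F^\star$. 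Combining the two inclusions, every $F$ in the support equals $F^\star$, so $y^j_{F^\star}=1$ and all other coordinates vanish. As $j$ was arbitrary, $y$ is integral. Note this also certifies $F^\star\in\bigcap_{j\in K}\mathcal F_j=\mathcal F$, so the consensus really is a feasible solution of $\Pi$.

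The step that deserves care, and where the union-free hypothesis is meant to enter, is the passage to the form of the master that is actually solved in the Branch-and-Price. To keep the pricing dual costs nonnegative, Proposition~\ref{prop:1} replaces the linking equalities \eqref{ACG:Rj} by inequalities ``$\ge$'', which is legitimate precisely because $\Pi$ is union-free. Under ``$\ge$'' the ``$F\supseteq F^\star$'' half of the argument above is lost: one only retains $F\subseteq F^\star$ for each support set, so a priori $y^j$ could be a strict convex combination of several distinct feasible subsets of $F^\star$. The main obstacle is therefore to rule out such a split support. This is exactly the purpose of Definition~\ref{cond1}: since $F^\star\in\mathcal F$, any two distinct support sets $F_1\neq F_2$ of some $y^j$ would satisfy $F_1\cup F_2\subseteq F^\star$, and the union-free property forbids two distinct feasible solutions from combining into a feasible one. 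I would thus close the ``$\ge$'' case by contradiction: assume $y$ fractional, extract two distinct support sets for some $j$, and contradict union-free via the feasibility of $F^\star$. The delicate point to get right in this last argument is to guarantee that $F^\star$ is itself feasible and that the extracted support sets genuinely exhibit a forbidden union, which is where the structural constraints \eqref{ACG:S} together with union-free must be used jointly rather than in isolation.
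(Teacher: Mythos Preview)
Your argument for the model with the equality linking constraints \eqref{ACG:Rj} is correct and in fact tighter than the paper's. You combine the two halves of \eqref{ACG:Rj}---the $x_e=0$ case giving $F\subseteq F^\star$ for every support set $F$, and the $x_e=1$ case together with \eqref{ACG:Conv} giving $F\supseteq F^\star$---to pin every support set of $y^j$ down to $F^\star$ directly. This shows, as you observe, that the union-free hypothesis is not needed at all when \eqref{ACG:Rj} are equalities; the paper's own remark following the proposition implicitly concedes the same. The paper instead argues by contradiction: assuming $y$ fractional, it picks two distinct support sets $F^y_1,F^y_2$, asserts $F^x=F^y_1\cup F^y_2$ from the equalities, and then invokes union-free. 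That route is longer and leaves implicit why $F^x\in\mathcal F$, which is what the contradiction with Definition~\ref{cond1} would require; your direct collapse of the support avoids this issue entirely.

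You are also right that the place union-free is genuinely meant to work is the ``$\geq$'' variant enabled by Proposition~\ref{prop:1}, since there the $F\supseteq F^\star$ half is lost. Your closing caveat is well placed: from $F_1\cup F_2\subseteq F^\star$ one cannot conclude $F_1\cup F_2\in\mathcal F$, and Definition~\ref{cond1} forbids only unions that themselves lie in $\mathcal F$, not subsets of feasible sets. To close that case one would need $F^\star$ to \emph{equal} the union and to be in $\mathcal F$; the paper glosses over the same point. In short, your equality-case proof is complete and improves on the paper; your $\geq$ discussion correctly isolates the residual subtlety without fully resolving it.
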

\begin{proof} 
Let us consider vectors $(x^*, y^*)$ satisfying Constraints \eqref{ACG:S}-\eqref{ACG:x} and let $F^x\subseteq E$ (resp. $F^y\subseteq E$) be the subset of $E$ associated to $x$ (resp. to one variable $y$), with nonzero values. First, we show that $F^y\subseteq F^x$. Let us suppose the contrary, that is $F^y\not\subseteq F^x$ where $e\in F^y\setminus F^x$. As $e\notin F^x$, $x^*_e=0$. By Constraints \eqref{ACG:Rj}, it is easy to see that variable $y^j_{F^y} = 0$. Now, suppose that $x^*$ is integer and $y^*$ is fractional. By \eqref{ACG:Conv}, there exists $j\in K$, $F^y_1,F^y_2\in \mathcal F_j$, with $F^y_1\neq F^y_2$, such that $y^*_{F^y_1}>0$ and $y^*_{F^y_2}>0$. Moreover, equalities \eqref{ACG:Rj} implies that $F^x$ is the union of $F^y_1$ and $F^y_2$. This implies that problem $\Pi$ cannot be union-free, contradiction. 
\hfill $\square$ 
\end{proof}

Remark that due to the no-good cut \eqref{no_good_cut}, if the problem $\Pi$ is not union-free, then $x$ is a valid solution for each atomic algorithm. Thus, there exists an associated integer vector $y$. In the case where the equalities \eqref{ACG:Rj} are not replaced, since they come from the equalities \eqref{comp_ext:ct_linking} after a Dantzig-Wolfe decomposition this implies that if the vector $x$ is integer then $y$ so is.

Thanks to Prop. \ref{intx1} and previous remarks it is only necessary to branch on the $x$ variables that are polynomial in number. 
Thus, classical branching strategies can be used. 
For the implementation, we can manage the branching only by fixing some $x$ variables to 0 and have it complete. If a branching is done by setting $x$ variables to 0 then the pricing is managed by filtering elements from the original problem to work on a smallest instance. A basic branching strategy is to consider any possible binary combination of the vector of the variable $x$ and let the variable be free when the value is equal to 1. In this case, the branching is complete since all solutions are explored. 

\begin{proposition} The $Generic$-$ACG$-$A$ reaches the optimal solution even if a non-trivial heuristic is used for the pricing problems.
\end{proposition}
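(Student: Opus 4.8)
The plan is to transpose the argument given for the path-specific version of this proposition (the $ACG$-$M$ branching scheme) into the generic setting, where a solution is an arbitrary subset $F\subseteq E$ rather than an $s$-$t$ path. Two ingredients carry the proof: (i) branching on the $x$ variables alone is \emph{complete}, so every candidate subset $F^x\subseteq E$ is reachable as a leaf of the tree, and (ii) a \emph{non-trivial} heuristic can, by definition, verify the feasibility of any fully determined solution against its own additional constraint $f^j$. Completeness of the enumeration together with exact feasibility verification at the leaves will yield a global optimum, regardless of how coarsely the pricing problems are solved.

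First I would recall the branching setup fixed just above the statement. A node is refined by fixing some $x_e$ to $0$ (equivalently, filtering the corresponding elements out of each pricing instance), and since---by Prop.~\ref{intx1} in the union-free case and by the no-good cuts~\eqref{no_good_cut} otherwise---an integer $x$ always admits a consistent integer $y$, it suffices to branch on $x$. Enumerating all binary combinations of $x$ is then complete, so every subset $F^x\subseteq E$ is eventually reached. I would then verify each such candidate directly against $\Pi$: the structural constraints~\eqref{comp:ct_base} are linear in $x$ and enforced in the master, while the additional constraints~\eqref{comp:ct} are tested by invoking, for every $j\in K$, the feasibility routine of the corresponding non-trivial heuristic; $F^x$ is accepted and allowed to update the incumbent only if all these tests succeed. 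As in the path proof, this partitions the nodes into three cases on the filtered instance: if it determines a single structural solution, only the feasibility routines are called; if it admits no member of $\mathcal{F}$, the node is pruned; otherwise branching continues.

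The hard part will be arguing that heuristic pricing never triggers an \emph{incorrect} prune. The subtlety is that any lower bound used to discard a node must stay valid---it must never exceed the true optimum of the corresponding subtree---whereas a heuristic pricing can stop the column generation early and hand back a bound that is too optimistic, hence invalid. The resolution I would spell out is that a bound coming from an atomic algorithm is accepted only when that algorithm emits an \emph{optimality certificate}, which a genuine heuristic withholds: it reports \textit{false} for both the optimality and the unfeasibility certificates, except on trivially determined instances where the filtered set already encodes a single structural solution. Under heuristic pricing the usable lower bounds therefore degrade to the always-valid structural completion bounds, and the only pruning that survives is on certified structural infeasibility. Consequently no feasible, and in particular no optimal, subset can be discarded; combined with the completeness of the $x$-enumeration and the exact feasibility test performed at every integer leaf, this shows that the solution returned by $Generic$-$ACG$-$A$ at termination is a global optimum of $\Pi$.
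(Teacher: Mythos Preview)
Your proposal is correct and follows the same core argument as the paper: completeness of the $x$-branching enumeration combined with the ability of every non-trivial heuristic to verify feasibility once the filtered instance pins down a single candidate solution. The paper's own proof is in fact much terser---it simply observes that when $x$ is integer it satisfies the structural constraints~\eqref{ACG:S}, that each pricing heuristic can then check its own additional constraint on the filtered instance, and that completeness of the branching closes the argument.

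Your third paragraph, on the danger of heuristic pricing yielding invalid lower bounds and hence incorrect pruning, goes beyond what the paper writes. The paper sidesteps this issue by appealing to the ``basic branching strategy'' described just before the statement, which is a full enumeration of binary $x$-vectors with no bound-based pruning; under that reading, the concern never arises. Your resolution via optimality certificates is really imported from the path-specific scheme (Algorithms~\ref{branch}--\ref{update}) rather than from the generic description. It is a sound and useful observation, but for the purpose of matching the paper's proof it is an elaboration rather than a missing ingredient.
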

\begin{proof}
    
If $x$ variables are integers at some node of the branching tree and as $x$ satisfies the structural constraints \eqref{ACG:S}, each pricing problem can be used to check if the solution, given by $x$, respects the associated additional constraints or not. This approach works since the pricing problem is solved in a filtered instance, where elements are removed according to the variables $x$ equal to 0. Since the branching is complete, the result follows.   \hfill $\square$ 
\end{proof}

\section{Numerical Results}
\label{sec:results}

We now evaluate the Augmented PCE over a particular RSCP use case. It has been selected for the variety of its constraints and the fact that we can derive, for benchmarking, an effective and optimal dedicated algorithm.

\subsection{Use Case: Resource Constrained Shortest Path (RCSP)}
\label{sec:rcsp}

We consider a path computation problem with $3$ upper bounds  (jitter, loss and TE cost), $3$ ranges  (average delay, peak delay, hop count) and $1$ node inclusion. This problem can be converted to an RCSP problem. Indeed, packet loss is made additive using the log function. Node inclusion is satisfied using a specific resource called node inclusion resource, being $1$ on outgoing arcs of the inclusion node and $0$, elsewhere. Lower and upper bound of the node inclusion resource ensure that the metric over the path is exactly $1$ and thus the node must be included. For example, in a graph $G=(V,A)$, if a node $u\in V$ must be included in a path from $s\in V$ to $t\in V$, then we can consider a specific range constraint to handle the node inclusion constraint. Clearly, if the lower and upper bounds are set to 1, the metric for this resource is set to 0 for all links except the outgoing links of $u$ then the shortest path must cross exactly one outgoing link of $u$. 

More formally, let $r_a^j$ be the consumption of resource $j\in K$ associated with the link $a\in A$. The RCSP problem consists in finding the shortest path $p$ such that 
upper bound, i.e, $\sum_{a \in p}r_a^j \leq u_j$ and lower bound, i.e., $\sum_{a \in p}r_a^j \geq l_j$, constraints are satisfied. 
Note that to consider  lower bound constraints,
we must add subtour elimination constraints, so that no circuit appears in the solution, as in the formulation presented in~\cite{beasley1989algorithm}.

\subsection{Benchmark Algorithms: MultiPulse and Compact ILP Model}

To solve the RCSP model, algorithms based on Lagrangian relaxation~\cite{beasley1989algorithm, pugliese2013survey} combined with dynamic programming have been proposed. However, the associated sub-problem consists in a shortest path with negative costs (due to lower bound constraints) which must be solved optimally so that the sub-gradient algorithm provides a useful lower bound to dynamic programming, making such a solution impractical.

\textbf{MultiPulse.} As far as we know, Pulse~\cite{LOZANO2013378} outperforms existing algorithms for multiple upper-bound constraints and it is optimal. It adopts a memory-efficient dynamic programming approach using DFS (Depth First Search). For the sake of benchmarking, we generalized it to take into account lower-bound constraints. If the pruning procedures remain unchanged for the upper bounds, we define a specific DFS strategy to quickly go toward a feasible solution and satisfy lower bound constraints. From a current state $s$ with $n$ being the last node of the partial path $p$ under construction, we choose, in preference, arc $a^* \in \delta^+(n)$ to extend the path, so that: 
$$
a^*= \argmin_{a' \in \delta^+(n)} \max_{j \in K}  (l_j - \sum\limits_{a\in p}r^{j}_{a} - r^j_{a'}) 
$$
if $\max_{j \in K}  (l_j - \sum\limits_{a \in p}r^{j}_a) > 0$, otherwise, we take the shortest path direction.

In other words, the DFS strategy consists in choosing the next arc so that the worst gap of current resource consumption to lower bounds is reduced. This occurs only if at least one lower bound is not reached. Otherwise, it goes back to a classical shortest-path strategy. The efficiency of such an approach lies in the fact that lower bounds can be difficult to satisfy for some resources. On the contrary, the pruning techniques described in \cite{LOZANO2013378} are known to be very efficient to satisfy upper-bound constraints at the earliest stages.

\textbf{Compact ILP.} We also compare against a Compact model. 
Recall that, as opposed to ACG, it has full knowledge of the problem and cannot reuse any available black-box algorithm. In this case, we solve the model using IBM ILOG CPLEX 12.6 solver~\cite{cplx}.

In our implementation of ACG, as the underlying atomic algorithm (one instance for each constraint), we considered MultiPulse configured either for one range or one upper bound constraint. We also considered short time limits inside branching of $T_{acg}=500$ms and $T_{atomic}=60$ms. Table \ref{timelimits} provides a comparison with different values of the time limits. Each line provides the average execution time over the 70 instances considered in the rest of the paper. We have considered either $T_{acg}$ or $T_{atomic}$ with a value multiplied by two or divided by two. 
We remark that the setting  $T_{acg}=500$ms and $T_{atomic}=60$ms obtains the best average computational time.
\begin{table}
\begin{center}
\begin{tabular}{|c | c |  c | c|} 
 \hline
  Settings ($T_{acg}, T_{atomic}$) & Average of computational times  \\[0.5ex] 
 \hline\hline
  $500$ms, $60$ms  & \textbf{53.46ms} \\
 \hline
  $500$ms, $30$ms  & 53.51ms \\
 \hline
  $500$ms, $120$ms & 53.81ms \\
 \hline
 $250$ms, $60$ms & 54.16ms \\
 \hline
  $1000$ms, $60$ms & 54.03ms\\
 \hline
\end{tabular} 
 \caption{Impact of time limits on computational times. }
 \label{timelimits}
\end{center}
\end{table}

All implementations are in C++ on a machine with an Intel(R) Xeon(R)
CPU E5-4627 v2 at 3.30GHz and 504GB RAM, running Linux 64 bits. A maximum of $32$ threads has been used for CPLEX and ACG.  
More specifically, ACG's branching scheme was allowed to use different threads to explore simultaneously different branches, as well as for the execution of atomic algorithms. For fair comparison against MultiPulse, we also provide results with $1$ thread for ACG. Finally we use $\Gamma=0.2$ and in all cases, the global time limit is $120$s.

\subsection{Instances}

\subsubsection{Topologies}
We tested two types of instances, the first set was made on 31x31 grid topology (961 nodes, 3720 arcs) in a similar manner as RMF \cite{Go98}, as it generates realistic instances. Then we complete the experimental setup with SNDlib topologies: france, geant, germany50, giul39, janos-us, nobel-eu, sun, ta1, and zib54 
\cite{SNDlib} and zoo topologies: Abvt, Canerie, Evolink, Goodnet, HurricaneElectric, Ibm, Quest, Rediris, Xeex \cite{zootopology}. The way to generate resources and constraints is the same for each topology and is described below\footnote{All instances are available on the following public repository:\\
\url{https://github.com/MagYou/Atomic-Column-Generation}}. 

\subsubsection{Feasible instances
}
First we generated cost and resource vectors as uniform distribution in $[10,100]$. Then, from a random node, we perform a random walk that must be elementary. The random walk stops when a \textit{path size} limit parameter is reached, or no additional node can be appended without violating the elementary path property. For this path, we compute its associated resource value and we allow some variations to produce constraints. To produce non-trivial instances, we choose a variation of 20\% (an addition for upper bound and a subtraction for lower bound, if applicable). This ensures the feasibility of each instance while the optimal solution is not necessarily the original computed path. We considered 10 instances for each path size and considered for each one 3 upper bound and 3 range constraints. The node to include is randomly selected among nodes in the path computed from the random walk, not considering the source and the destination.

\subsubsection{Unfeasible instances}
We have also tested the capability to detect unfeasibility. To generate hard instances, given the previous metrics and bounds, we first computed the optimal RCSP on a given resource vector $r_1$, with a cost function corresponding to another resource vector $r_2$. Then for the latter resource, we enforce an upper bound strictly less than this optimal solution value. In this way, the unfeasibility comes from the incompatibility among the different sub-problems while a feasible solution may exist for each of them separately. For example, given a resource $r_1$ and its associated lower and upper bound $l_1,u_1$, we find the optimal path $p$ that minimizes resource $r_2$, while $l_1 \leq V_{r_1}(p) \leq u_1$ is satisfied (with $V_r(p)$ is the resource value of path $p$ on resource $r$). As $V_{r_2}(p)$ is minimal, putting an upper bound $u_2<V(r_2)$ will automatically leads to unfeasibility, that can only be detected by confronting the two resources. 

\subsection{Results}
In this section, we present the performances of each algorithm on four sets of instances. Feasible and unfeasible options applied on Grid and realistic instances.
\subsubsection{Feasible instances}
First, let us focus on the feasible instances.
\paragraph{Grid instances}
First, we extensively benchmark our different versions of $ACG$ on Grid instances. This will help us to keep the best $ACG$ version in terms of solving capabilities. \\

We have considered the following versions of the ACG method: 
\begin{itemize}
    \item ACG: the Branch\&Price algorithm where 32 threads are allowed,
    \item ACG-1: the Branch\&Price algorithm with only one thread,
    \item ACG-H: the Branch\&Price algorithm where atomic algorithms cannot certify the optimality and unfeasibility (case with heuristic atomic algorithms),
    \item ACG-R: the solution without branching (root node).
\end{itemize}

For each figure presented in this section, each dot represents the average value over 10 random instances.
\begin{figure}[h!]
    \centering
        \includegraphics[width=0.5\linewidth]{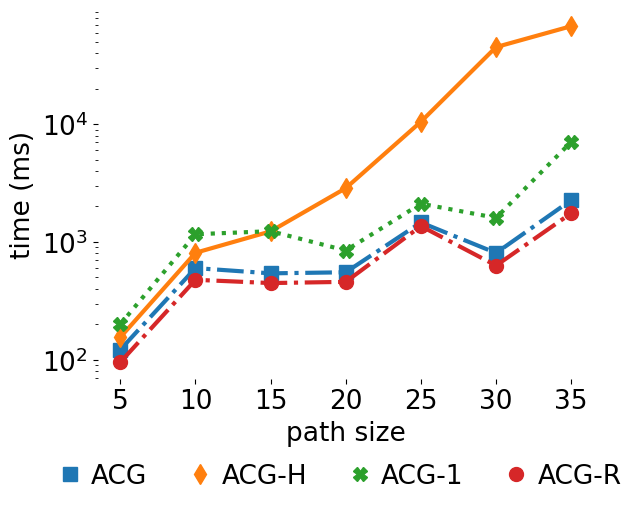}
        \caption{Computation time (ms) to compare the different options of the ACG method on feasible instances from Grid topologies.}
        \label{fig:time-02}
    \centering     \includegraphics[width=0.5\linewidth]{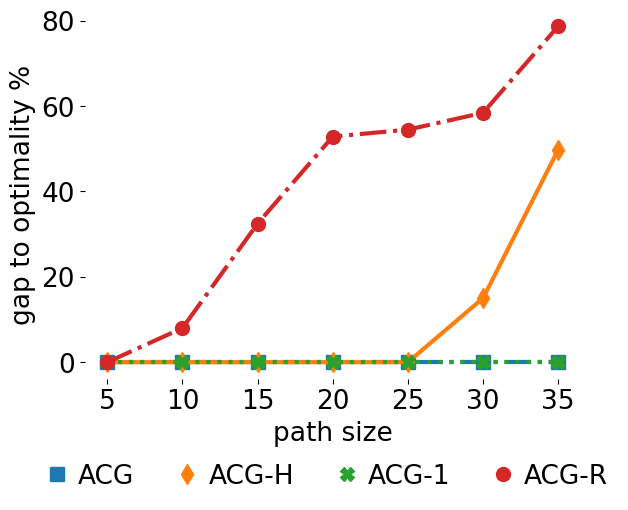}
        \caption{Optimality gap (\%) to compare the different options of the ACG method on feasible instances from Grid topologies.}
        \label{fig:gap-02}
\end{figure}

 \begin{figure}[h!]
        \centering
    \includegraphics[width=0.55\linewidth]{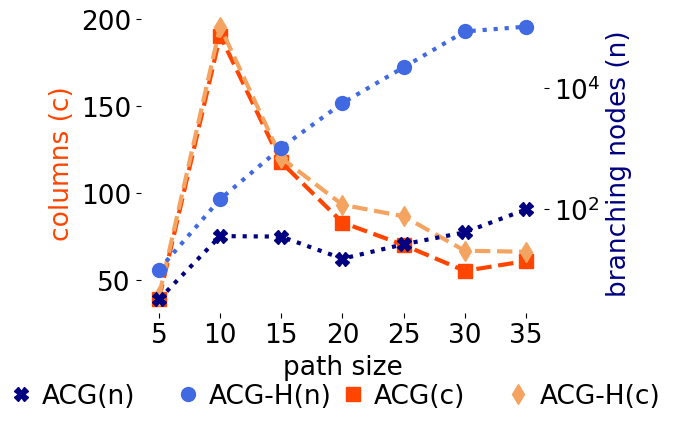}
        \caption{Columns and branches generated by the different options of the ACG method on feasible instances from Grid topologies.}
        \label{fig:items-02}
\end{figure}

\begin{figure}[h!]
    \centering
        \includegraphics[width=0.5\linewidth]{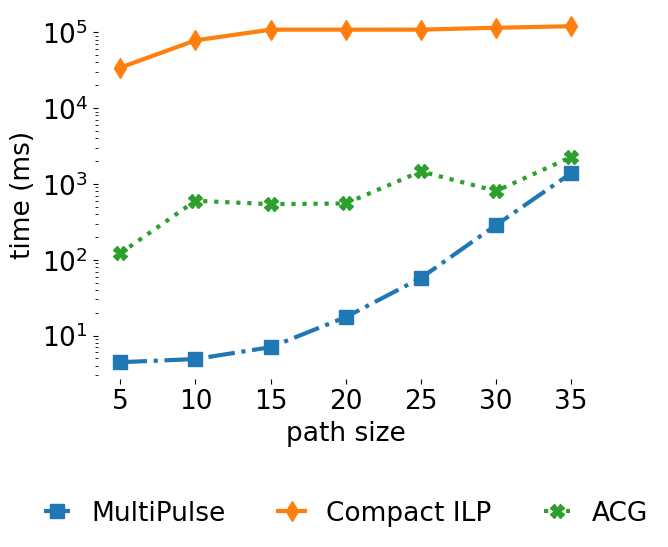}
        \caption{Computation time (ms) for comparing the three methods, state-of-the-art compact ILP, dedicated algorithm MultiPulse method, and our generic ACG method on feasible instances from Grid topologies.} 
        \label{fig:time-01}
\end{figure}
\begin{figure}[h!]
        \centering
        \includegraphics[width=0.5\linewidth]{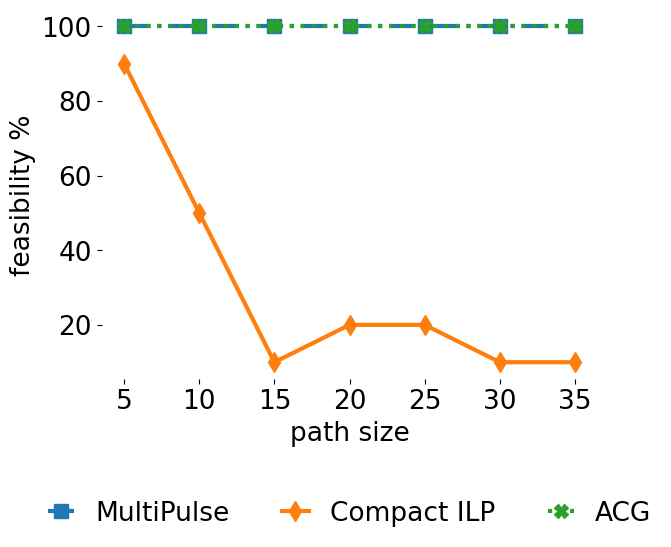}
         \caption{Feasibility (\%) for comparing the three methods, state-of-the-art compact ILP, dedicated algorithm MultiPulse method, and our generic ACG method on feasible instances from Grid topologies.}
         \label{fig:feas-01}
\end{figure}
\begin{figure}[h!]
    \centering     \includegraphics[width=0.5\linewidth]{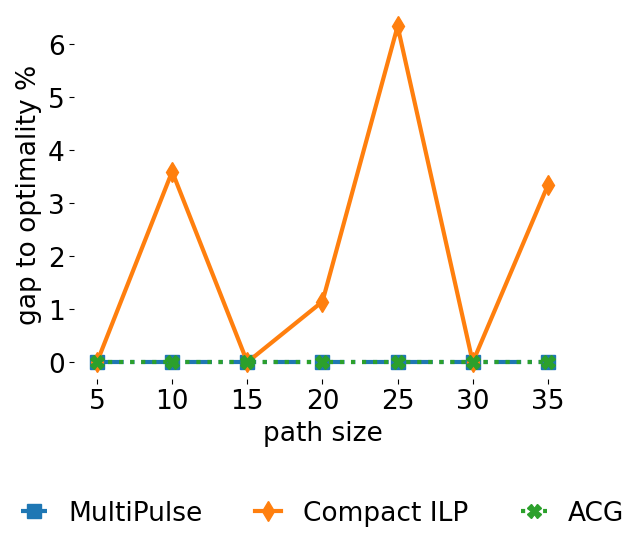}
        \caption{Optimality gap (\%) for comparing the three methods, state-of-the-art compact ILP, dedicated algorithm MultiPulse method, and our generic ACG method on feasible instances from Grid topologies.}
        \label{fig:gap-01}
\end{figure}

In Fig.~\ref{fig:time-02}, we remark that the number of threads has a non-negligible impact on the branching part, and it allows for reducing the computational time, close to ACG-R (without branching). Furthermore, when pricing problems are heuristics, i.e. ACG-H, the computational time increases since the branching strategy needs more time to prove the optimality. This shows that the quality of the solution in the root node has an impact on the branching strategy. For all instances, each ACG method finds a feasible solution. Fig.~\ref{fig:gap-02} shows the quality of the solution found. ACG and ACG-1 always found the optimal solution whereas ACG-H reaches the time limit for some instances where the path size is greater than $25$. We also remark that ACG-R (without branching) always finds a feasible solution and the quality of the solution degrades in correlation with the instance hardness i.e, when the path size increases. In Fig.~\ref{fig:items-02}, we remark that for a path size equals to $5$, the number of generated columns is small and the solution is found at the root node for the ACG method where pricing is solved to optimality. When pricing problems are solved heuristically, i.e. ACG-H, we need to branch even for a path size equals to $5$. The branching allows to guarantee optimality. We also remark that when the path size is more than $10$, the number of generated columns decreases. This is due to the time limit where ACG needs more time to find columns. An interesting observation is that, when pricing problems are solved to optimality, the number of nodes in the branching tree does not grow too much according to the difficulty of the instance. For ACG-H, the opposite can be observed since the number of nodes in the branching tree increases exponentially with the difficulty of the instance. These results confirm the strength of our model to find a good solution at the root node and drive the branching strategy.

We now focus, on the performance of ACG method in comparison with the state-of-the-art compact ILP and a dedicated algorithm to solve the problem, i.e. MultiPulse. Fig.~\ref{fig:time-01} shows the computational time of these three methods. 
First, MultiPulse is the best algorithm in terms of speed and solution's quality. 
Second, the compact ILP quickly reaches the time limit, except on small instances where it remains competitive, see Fig. \ref{fig:sndlib}. Third, we remark that instances with a bigger path size are harder to solve, and the computational time of the ACG method converges to the computational time of MultiPulse when instances become harder. In Fig.~\ref{fig:feas-01}, we remark that, for a lot of instances, the compact ILP is not able to find a feasible solution. For the instances where all methods found a feasible solution, we compare the gap to optimality in Fig.~\ref{fig:gap-01}. We remark that, when compact ILP is able to find a feasible solution, the gap to optimality can be higher than 6\%, whereas MultiPulse and ACG always find the optimal solution.

We also compared the optimal solution of the relaxations, namely ACG-R and the linear relaxation of the compact ILP. In our instances, getting the optimal solution of ACG-R was often 10 times slower than for the compact ILP, but it was most of the time equal to ACG while compact ILP's relaxation was up to 30\% of this value, confirming experimentally Proposition~\ref{prop_relax}.

\paragraph{Realistic instances}~

We now compare our best ACG setup, called ACG, 
on a batch of realistic topologies, following the same process as before for generating the resources, lower and upper bounds and node inclusion. Fig. \ref{fig:performance} shows the performance profile, that depicts the percentage of solved instances given a certain amount of time.
Although the compact ILP is clearly dominated on our Grid instances, it is globally quite efficient to solve the realistic instances, see Fig. \ref{fig:performance}, as the topologies are small. However, the chart also showcases a non negligible set among of them that has been solved in shorter amount of time by the two other algorithms. On average, the compact ILP performs well on these small instances. We can remark that Multipulse is always better than ACG but the performances are close for instances with large path size. 

\begin{figure}[h!]
    \centering
        \includegraphics[width=0.5\linewidth]{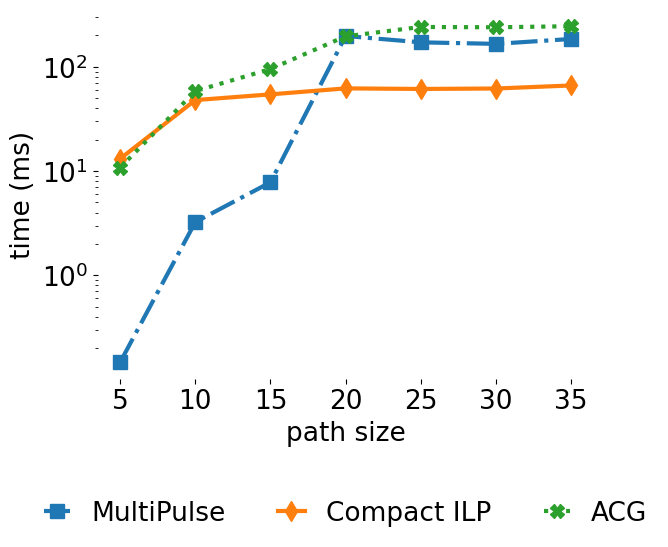}
        \caption{Computation time (ms) for comparing the three methods, state-of-the-art compact ILP, dedicated algorithm MultiPulse method, and our generic ACG method on feasible instances from realistic topologies.}
        \label{fig:sndlib}
\end{figure}

\begin{figure}
    \centering
    \subfloat[\centering Feasible instances from Grid topologies.]{{\includegraphics[width=0.45\linewidth]{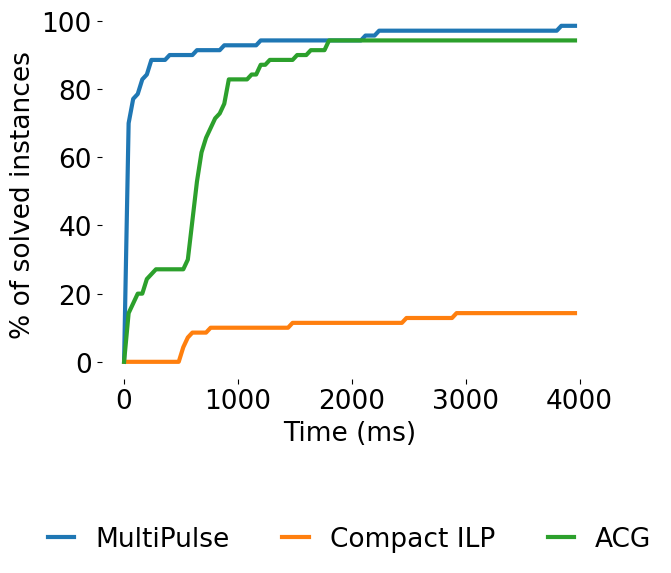} }}%
    \qquad
    \subfloat[\centering Feasible instances from realistic topologies.]{{\includegraphics[width=0.45\linewidth]{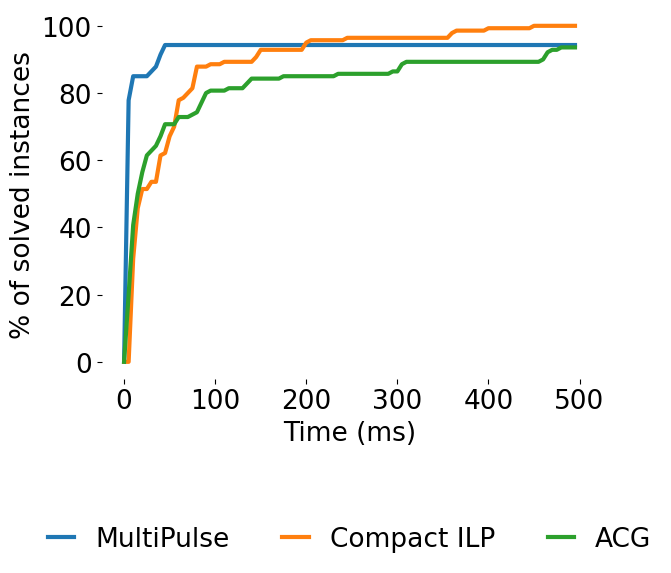} }}%
    \caption{Performance profiles for feasible instances from Grid and realistic topologies.}%
    \label{fig:performance}
\end{figure}

\subsubsection{Unfeasible instances}

We now compare the capabilities of the 3 algorithms to detect a non-trivial unfeasibility on the two kinds of topologies Grid and realistic.

\begin{figure}
    \centering
    \subfloat[\centering Unfeasible instances from Grid topologies.]{{\includegraphics[width=0.45\linewidth]{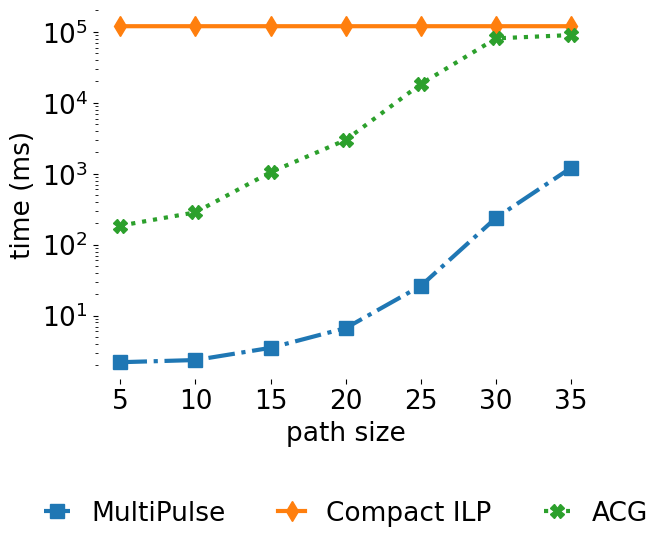} }}%
    \qquad
    \subfloat[\centering Unfeasible instances from realistic topologies.]{{\includegraphics[width=0.45\linewidth]{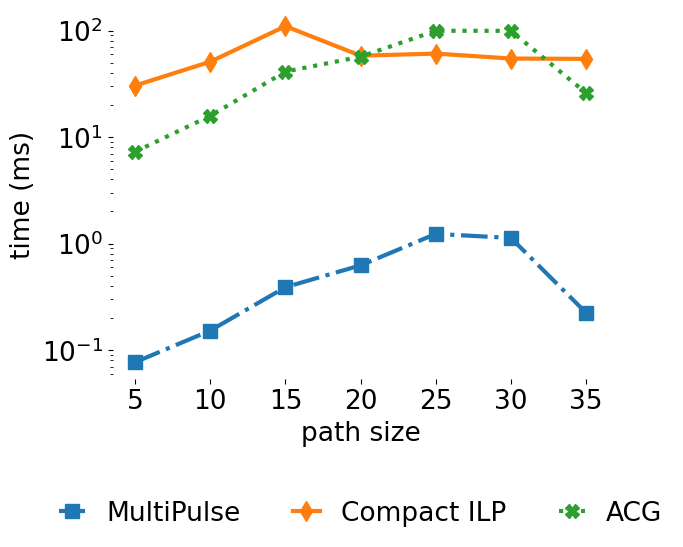} }}%
    \caption{Unfeasibility  detection  for Grid and realistic topologies.}%
    \label{fig:unfeasibility}
\end{figure}

On unfeasible instances from Grid topologies, see Fig. \ref{fig:unfeasibility}, we observe that compact ILP cannot detect unfeasibility. The time limit is reached on all instances. 
In realistic topologies, MultiPulse is by far the best for detecting unfeasibility, while compact ILP and our ACG scheme are quite similar in the amount of time they require. 
This is due to the size of realistic instances that is much smaller than Grid instances.

\section{Conclusion}
\label{sec:conclusion}

In this paper, we developped a novel framework, called Atomic Column Generation ($ACG$), that allows to easily combine efficient existing algorithms, referred to as \emph{atomic}, without the need to develop a dedicated algorithm for the whole problem. Our proposal is derived from Dantzig-Wolfe decomposition, allows converging to an optimal consensus solution from a pool of  atomic algorithms. We showed that
this decomposition improves the continuous relaxation and describe the associated Branch-and-Price algorithm. We considered a specific use case in telecommunications networks where several Path Computation Elements (PCE) are combined as atomic algorithms to route traffic. We demonstrated the efficiency of ACG on the resource-constrained shortest path problem associated with each PCE and showed that it remains competitive with benchmark algorithms.

In telecommunication context, future work includes multi-paths and multi-commodity flow extensions. Actually, we have already developed a functional multi-paths version but we preferred to focus on the single-path case to ease presentation.

\bibliographystyle{abbrv-networks}
\bibliography{biblio}

\begin{thebibliography}{10}

\bibitem{ahuja1995applications}
R.K. Ahuja, T.L. Magnanti, J.B. Orlin, and M.~Reddy, {\em Applications of
  network optimization}, Handbooks  Oper. Res.  Manage. Sci. {\bf 7} (1995),
  1--83.

\bibitem{aneja1978constrained}
Y.P. Aneja and K.P. Nair, {\em The constrained shortest path problem}, Naval
  Res. Logist. Q. {\bf 25} (1978),  549--555.

\bibitem{beasley1989algorithm}
J.E. Beasley and N.~Christofides, {\em An algorithm for the resource
  constrained shortest path problem}, Networks {\bf 19} (1989),  379--394.

\bibitem{bockmayr2003detecting}
A.~Bockmayr and N.~Pisaruk, {\em {Detecting infeasibility and generating cuts
  for MIP using CP}}, 5th International Workshop on Integration of AI and OR
  Techniques in Constraint Programming for Combinatorial Optimization
  Problems-CPAIOR'03, 2003, pp.  11--p.

\bibitem{bui2006ant}
T.N. Bui and C.M. Zrncic, {\em An ant-based algorithm for finding
  degree-constrained minimum spanning tree}, Proceedings of the 8th annual
  conference on Genetic and evolutionary computation, 2006, pp.  11--18.

\bibitem{calvigioni2018quality}
G.~Calvigioni, R.~Aparicio-Pardo, L.~Sassatelli, J.~Leguay, P.~Medagliani, and
  S.~Paris, {\em {Quality of Experience-based routing of video traffic for
  overlay and isp networks}}, Proc. IEEE INFOCOM, 2018.

\bibitem{subtour}
G.~Dantzig, R.~Fulkerson, and S.~Johnson, {\em {Solution of a large scale
  traveling salesman problem}}, J.   Oper. Res. Soc.  America {\bf 2} (1954),
  393--410.

\bibitem{desaulniers2006column}
G.~Desaulniers, J.~Desrosiers, and M.M. Solomon, {\em Column generation}
  Vol.~5, Springer Science \& Business Media, 2006.

\bibitem{ferone2016constrained}
D.~Ferone, P.~Festa, F.~Guerriero, and D.~Lagan{\`a}, {\em The constrained
  shortest path tour problem}, Comput.  Oper. Res. {\bf 74} (2016),  64--77.

\bibitem{Go98}
A.V. Goldberg, J.D. Oldham, S.~Plotkin, and C.~Stein, {\em An implementation of
  a combinatorial approximation algorithm for minimum-cost multicommodity
  flow}, Integer Programming and Combinatorial Optimization, Springer Berlin
  Heidelberg, Berlin, Heidelberg, 1998, pp.  338--352.

\bibitem{gower1969minimum}
J.C. Gower and G.J. Ross, {\em Minimum spanning trees and single linkage
  cluster analysis}, J.   Royal Stat. Soc.: Ser. C (Applied Statistics) {\bf
  18} (1969),  54--64.

\bibitem{GRISET20221067}
R.~Griset, P.~Bendotti, B.~Detienne, M.~Porcheron, H.~Şen, and F.~Vanderbeck,
  {\em Combining dantzig-wolfe and benders decompositions to solve a
  large-scale nuclear outage planning problem}, Eur. J.  Oper. Res. {\bf 298}
  (2022),  1067--1083.

\bibitem{hartert2015declarative}
R.~Hartert, S.~Vissicchio, P.~Schaus, O.~Bonaventure, C.~Filsfils, T.~Telkamp,
  and P.~Francois, {\em A declarative and expressive approach to control
  forwarding paths in carrier-grade networks}, ACM SIGCOMM computer
  communication review {\bf 45} (2015),  15--28.

\bibitem{he20121}
B.~He and X.~Yuan, {\em {On the O(1/n) convergence rate of the
  Douglas--Rachford alternating direction method}}, SIAM J.  Numer. Anal. {\bf
  50} (2012),  700--709.

\bibitem{IANA}
IANA, 2008. Iana registry for pcep.
  \url{https://www.iana.org/assignments/pcep/pcep.xhtml}.

\bibitem{cplx}
IBM, 2013. Ibm ilog cplex 12.6.
  \url{https://www.ibm.com/products/ilog-cplex-optimization-studio}.

\bibitem{jadin2019cg4sr}
M.~Jadin, F.~Aubry, P.~Schaus, and O.~Bonaventure, {\em {CG4SR: Near optimal
  traffic engineering for segment routing with column generation}}, Proc. IEEE
  INFOCOM, 2019.

\bibitem{jeong2023review}
I.J. Jeong, {\em A review of decentralized optimization focused on information
  flows of decomposition algorithms}, Comput.  Oper. Res. (2023),  106190.

\bibitem{juttner2001lagrange}
A.~Juttner, B.~Szviatovski, I.~M{\'e}cs, and Z.~Rajk{\'o}, {\em {Lagrange
  relaxation based method for the QoS routing problem}}, Proc. IEEE INFOCOM
  2001, Vol.~2, 2001, pp.  859--868.

\bibitem{zootopology}
S.~Knight, X.~Nguyen, H., N.~Falkner, R.~Bowden, and M.~Roughan, {\em The
  internet topology zoo}, IEEE J.  Selected Areas  Commun. {\bf 29} (2011).

\bibitem{KORKMAZ2002225}
T.~Korkmaz, M.~Krunz, and S.~Tragoudas, {\em An efficient algorithm for finding
  a path subject to two additive constraints}, Comput. Commun. {\bf 25} (2002),
   225--238.

\bibitem{kritikos2017greedy}
M.~Kritikos and G.~Ioannou, {\em A greedy heuristic for the capacitated minimum
  spanning tree problem}, J.   Oper. Res. Soc. {\bf 68} (2017),  1223--1235.

\bibitem{liu2015multi}
Y.~Liu, Y.~Pan, M.~Yang, W.~Wang, C.~Fang, and R.~Jiang, {\em The multi-path
  routing problem in the software defined network}, 2015 11th International
  Conference on Natural Computation (ICNC), 2015, pp.  250--254.

\bibitem{LOZANO2013378}
L.~Lozano and A.L. Medaglia, {\em On an exact method for the constrained
  shortest path problem}, Comput.  Oper. Res. {\bf 40} (2013),  378--384.

\bibitem{lubbecke2010column}
M.E. L{\"u}bbecke, {\em Column generation}, Wiley encyclopedia  operations
  research  management science. Wiley, New York (2010),  1--14.

\bibitem{luby1989bidirectional}
M.~Luby and P.~Ragde, {\em A bidirectional shortest-path algorithm with good
  average-case behavior}, Algorithmica {\bf 4} (1989),  551--567.

\bibitem{martin2022constrained}
S.~Martin, Y.~Magnouche, C.~Juvigny, and J.~Leguay, {\em Constrained shortest
  path tour problem: branch-and-price algorithm}, Comput.  Oper. Res. {\bf 144}
  (2022),  105819.

\bibitem{mendiola2016survey}
A.~Mendiola, J.~Astorga, E.~Jacob, and M.~Higuero, {\em A survey on the
  contributions of software-defined networking to traffic engineering}, IEEE
  Commun. Surveys  Tutorials {\bf 19} (2016),  918--953.

\bibitem{morrison2016branch}
D.R. Morrison, S.H. Jacobson, J.J. Sauppe, and E.C. Sewell, {\em
  Branch-and-bound algorithms: A survey of recent advances in searching,
  branching, and pruning}, Discr. Optim. {\bf 19} (2016),  79--102.

\bibitem{SNDlib}
S.~Orlowski, R.~Wessäly, M.~Pióro, and A.~Tomaszewski, {\em Sndlib
  1.0—survivable network design library}, Networks {\bf 55} (2010),
  276--286.

\bibitem{paolucci2013survey}
F.~Paolucci, F.~Cugini, A.~Giorgetti, N.~Sambo, and P.~Castoldi, {\em {A survey
  on the path computation element (PCE) architecture}}, IEEE Commun. Surveys
  Tutorials {\bf 15} (2013),  1819--1841.

\bibitem{pugliese2013survey}
L.D.P. Pugliese and F.~Guerriero, {\em A survey of resource constrained
  shortest path problems: Exact solution approaches}, Networks {\bf 62} (2013),
   183--200.

\bibitem{SHI201713}
N.~Shi, S.~Zhou, F.~Wang, Y.~Tao, and L.~Liu, {\em The multi-criteria
  constrained shortest path problem}, Transp. Res. Part E: Logist.  Transp.
  Review {\bf 101} (2017),  13--29.

\bibitem{vanderbeck2006generic}
F.~Vanderbeck and M.W. Savelsbergh, {\em A generic view of dantzig--wolfe
  decomposition in mixed integer programming}, Oper. Res. Lett. {\bf 34}
  (2006),  296--306.

\bibitem{rfc4655}
J.~Vasseur, A.~Farrel, and G.~Ash, 2006. {A Path Computation Element
  (PCE)-Based Architecture}.

\bibitem{slicing}
S.~Vassilaras, L.~Gkatzikis, N.~Liakopoulos, I.N. Stiakogiannakis, M.~Qi,
  L.~Shi, L.~Liu, M.~Debbah, and G.S. Paschos, {\em The algorithmic aspects of
  network slicing}, IEEE Commun. Magazine {\bf 55} (2017),  112--119.

\bibitem{xiao2005gen}
Y.~Xiao, K.~Thulasiraman, and G.~Xue, {\em {GEN-LARAC: A generalized approach
  to the constrained shortest path problem under multiple additive
  constraints}}, Proc. ISAAC, 2005, pp.  92--105.

\bibitem{xu2017adaptive}
Z.~Xu, G.~Taylor, H.~Li, M.A. Figueiredo, X.~Yuan, and T.~Goldstein, {\em
  {Adaptive consensus ADMM for distributed optimization}}, International
  Conference on Machine Learning, 2017, pp.  3841--3850.

\bibitem{yang2022survey}
Y.~Yang, X.~Guan, Q.S. Jia, L.~Yu, B.~Xu, and C.J. Spanos, {\em {A survey of
  ADMM variants for distributed optimization: Problems, algorithms and
  features}}, arXiv preprint arXiv:2208.03700 (2022).

\bibitem{yao2019admm}
Y.~Yao, X.~Zhu, H.~Dong, S.~Wu, H.~Wu, L.C. Tong, and X.~Zhou, {\em {ADMM-based
  problem decomposition scheme for vehicle routing problem with time windows}},
  Transp. Res. Part B: Methodological {\bf 129} (2019),  156--174.

\bibitem{NCtoolbox}
B.~Zhou, I.~Howenstine, S.~Limprapaipong, and L.~Cheng, {\em A survey on
  network calculus tools for network infrastructure in real-time systems}, IEEE
  Access {\bf 8} (2020),  223588--223605.

\end{thebibliography}

\end{document}